\numberwithin{equation}{section}
\newcommand{\tdb}{\textup{\textcrd}}
\newcommand{\nfi}{\varphi}
\newcommand{\cp}{\partial}
\newcommand{\bs}{\boldsymbol}
\newcommand{\lp}{\left(}
\newcommand{\rp}{\right)}
\newcommand{\td}{\textup{d}}
\newcommand{\cu}{\mathcal{U}}
\newcommand{\cv}{\mathcal{V}}
\newcommand{\e}{\eta}
\newcommand{\be}{\bar{\eta}}
\newcommand{\Mp}{\mathcal{M^+}}
\newcommand{\Ml}{\mathcal{M^-}}
\newcommand{\lb}{\left\lbrace}
\newcommand{\rb}{\right\rbrace}
\newcommand{\lc}{\left[}
\newcommand{\rc}{\right]}
\newcommand{\ld}{\left.}
\newcommand{\rd}{\right.}
\newcommand{\lv}{\left\vert}
\newcommand{\rv}{\right\vert}
\newcommand{\la}{\langle}
\newcommand{\ra}{\rangle}
\newcommand{\rag}{\rangle_{g}}
\newcommand{\ragm}{\rangle_{g^-}}
\newcommand{\ragp}{\rangle_{g^+}}
\newcommand{\chr}{\Upupsilon}
\newcommand{\Mpm}{\mathcal{M}^{\pm}}
\newcommand{\tdo}{\tdb}
\newcommand{\lieo}{\mathsterling}
\newcommand{\uhat}{\underaccent{\check}}
\def\hup{h^{\sharp}}
\def\h{\widehat{h}}
\newcommand{\wt}{\widetilde}
\newcommand{\qhm}{\vert q\vert_{h^-}^2}
\newsavebox\myboxA
\newsavebox\myboxB
\newlength\mylenA
\newcommand*\xoverline[2][0.75]{%
    \sbox{\myboxA}{$\m@th#2$}%
    \setbox\myboxB\null
    \ht\myboxB=\ht\myboxA%
    \dp\myboxB=\dp\myboxA%
    \wd\myboxB=#1\wd\myboxA
    \sbox\myboxB{$\m@th\overline{\copy\myboxB}$}
    \setlength\mylenA{\the\wd\myboxA}
    \addtolength\mylenA{-\the\wd\myboxB}%
    \ifdim\wd\myboxB<\wd\myboxA%
       \rlap{\hskip 0.5\mylenA\usebox\myboxB}{\usebox\myboxA}%
    \else
        \hskip -0.5\mylenA\rlap{\usebox\myboxA}{\hskip 0.5\mylenA\usebox\myboxB}%
    \fi}
\newtheorem{prop}{Proposition}
\newtheorem{lem}{Lemma}
\newtheorem{cor}{Corollary}
\newtheorem{defn}{Definition}
\newtheorem{rem}{Remark}
\title{Null shells: general matching across null boundaries and connection with cut-and-paste formalism}
\author{Miguel Manzano and Marc Mars \\ \\
Instituto de F\'{\i}sica Fundamental y Matem\'aticas, IUFFyM\\
Universidad de Salamanca\\
Plaza de la Merced s/n \\
37008 Salamanca, Spain\\
}
 \newcounter{mnotecount}
 \newcommand{\mnote}[1]
 {\protect{\stepcounter{mnotecount}}$^{\mbox{\footnotesize
 $\,\bullet$\themnotecount}}$ \marginpar{
 \raggedright\tiny\em
 $\,\bullet$\themnotecount: #1} }
\begin{document}

\maketitle

\begin{abstract}
Null shells are a useful geometric construction to study the propagation of infinitesimally thin concentrations of massless particles or impulsive waves. In this paper, we determine and study the necessary and sufficient conditions for the matching of two spacetimes with respective null embedded hypersurfaces as boundaries. Whenever the matching is possible, it is shown to depend on a diffeomorphism between the set of null generators in each boundary and a scalar function, called step function, that determines a shift of points along the null generators. Generically there exists at most one possible matching but in some circumstances this is not so. When the null boundaries are totally geodesic, the point-to-point identification between them introduces a freedom whose nature and consequences are analyzed in detail. The expression for the energy-momentum tensor of a general null shell is also derived. Finally, we find the most general shell (with non-zero energy, energy flux and pressure) that can be generated by matching two Minkowski regions across a null hyperplane. This allows us to show how the original Penrose's cut-and-paste construction fits and connects with the standard matching formalism. 
\end{abstract}

\section{Introduction}

In General Relativity, thin shells (also called surface layers) are idealized objects introduced to describe sufficiently narrow concentrations of matter or energy such that they can be considered to be located on a hypersurface. Depending on the causal character of the hypersurface, thin shells can be null, timelike, spacelike or mixed. In this paper, we shall focus on 
null shells, which have been proven useful for modelling infinitesimally thin concentrations of massless particles or impulsive waves. Null shells possess their own gravity and hence affect the spacetime geometry. To study this, and also to understand how null shells can be constructed, 
  it is necessary to consider two spacetime regions (one at each side of the null shell) that must be suitably matched according to the corresponding theory.

  There exist two different approaches to generate null thin shells: the so-called ``cut-and-paste" method, introduced by Penrose \cite{dewitt1968battelle}, \cite{Penrose:1972xrn}, and the matching theory firstly developed by Darmois \cite{darmois1927memorial}. The ``cut-and-paste'' method describes the shell by means of a metric with a Dirac delta distribution with support on the shell. In these coordinates, the metric is therefore very singular, and standard tensor distributional calculus is not sufficient to study its geometry. However, by a suitable change of coordinates the metric becomes continuous and the method can be reinterpreted as follows.
  Given a spacetime $\lp\mathcal{M},g\rp$ containing an embedded null hypersurface $\Omega$, the cut-and-paste procedure uses lightlike coordinates adapted to $\Omega$. Then, $\Omega$ is removed by a cut, which leaves two separated manifolds $\lp\Mpm,g^{\pm}\rp$ corresponding to both sides of $\Omega$. Finally, those regions are reattached by identifying their boundaries so that there exists a jump on the coordinates when crossing the matching null hypersurface.
  This jump is responsible for the appearance of the Dirac delta term in the metric, which is interpreted as a concentration
  of matter and energy located on the matching hypersurface. By means of this useful geometrical approach, Penrose was able to study certain classes of impulsive plane-fronted and spherically-fronted waves in Minkowski's backgrounds. Later works of works of
  Podolsk{\`y}
  et al. \cite{podolsky1999nonexpanding}, \cite{podolsky2017penrose}, \cite{podolsky2019cut}
  apply this method to generate spacetimes whose metric again contains a Dirac delta function with support on the null hypersurface. The most general construction so far describes pp-waves with additional gyratonic terms \cite{podolsky2017penrose}.
  The cut-and-paste method is, by construction, strongly linked to the use of appropriate coordinate systems adapted both to the spacetime and to the null hypersurface where the cut is performed.

The second approach is the matching theory of Darmois, consisting of considering two spacetimes $\lp\Mpm,g^{\pm}\rp$ with respective differentiable boundaries $\Omega^{\pm}$ and generating a new spacetime by identifying the boundary points and the full tangent spaces defined on $\Omega^{\pm}$. The initial works by Darmois \cite{darmois1927memorial} 
were focused on the timelike and the spacelike cases. In any successful matching the spacetimes $\lp\Mpm,g^{\pm}\rp$ must satisfy the so-called preliminary junction conditions (see e.g. \cite{clarke1987junction}, \cite{mars1993geometry}), which force the boundaries $\Omega^{\pm}$ to be isometric with respect to their respective induced metrics. The resulting spacetime verifies the well-known Israel equations \cite{israel1966singular}, and describes a thin layer with material content given by the jump on the extrinsic curvatures. The null case is initially addressed by Barrab{\'e}s-Israel \cite{barrabes1991thin} and Barrab{\'e}s-Hogan \cite{barrabes2003singular} (see also
\cite{poisson2002reformulation} for a useful reformulation),
while the general causal character was developed 
in  \cite{mars1993geometry}, with an important aspect clarified in
\cite{mars2007lorentzian}
, \cite{poisson2004relativist}.
In 2013, the Israel equations have been firstly deduced for thin shells in a completely general (even variable) causal character \cite{mars2013constraint}. This has been achieved by means of a new formalism involving so-called
  \textit{hypersurface data}
  that allows one to abstractly analyse hypersurfaces of arbitrary signature in pseudo-riemannian manifolds \cite{mars2013constraint}, \cite{mars2020hypersurface}.
  The Israel equations  
   as well as additional equations
  involving curvature terms have been obtained recently in \cite{senovilla2018equations} also for shells of general causal character
    using the
    formalism of tensor distributional calculus. Many explicit examples of null shells in specific situations
    have been discussed in the literature, often by imposing additional symmetries such as spherical symmetry. We refer to
    \cite{carballo2021inner}, 
    %
    \cite{bhattacharjee2020memory},
    \cite{nikitin2018stability},
    \cite{chapman2018holographic},
    \cite{binetruy2018closed},
    \cite{kokubu2018energy},
    \cite{fairoos2017massless} and references therein for recent examples.

Despite the long history of both approaches, to the best of our knowledge there does not exist any systematic analysis of the connection between the cut-and-paste constructions and the matching theory of Darmois.
Our aim in this paper is two-fold. Firstly, we analyze in detail the Darmois matching theory
  across null hypersurfaces and obtain in full generality both the conditions that need to be satisfied for the matching and the energy-momentum contents of the resulting shell in terms
  the geometry of the ambient spaces and the identification of boundaries.
  We keep the choices of various geometric quantities on both sides very flexible, so that the general results can be applied easily to many different situations. This analysis is performed with our second goal in mind, namely to understand how the cut-and-paste construction fits into the matching framework.
Following the terminology of \cite{mars2013constraint}, \cite{mars2020hypersurface}, the preliminary junction conditions impose that the metric hypersurface data corresponding to the respective boundaries $\Omega^{\pm}$ must coincide. These metric data depend on both the geometry of the ambient spacetime and the way the hypersurface is embedded. In the cut-and-paste constructions, one deals with a single spacetime and a single null hypersurface. After the cut along this hypersurface
  a reorganization of points must be performed on one side. For this construction to fit into the general matching framework, it must be the case 
  the new embedding  defines the same metric hypersurface data, and this automatically restricts the set of possible reorganizations of points. The compatibility between the cut-and-paste method and the matching theory by Darmois will be achieved whenever the redistribution of points performed in the first keeps the metric hypersurface data invariant. We show in this paper that the original
cut-and-paste construction by Penrose \cite{penrose1965remarkable}, \cite{penrose1968twistor}, \cite{dewitt1968battelle}, \cite{Penrose:1972xrn}, namely the plane-fronted impulsive wave, satisfies this compatibility in full, and hence
  can be understood also in terms of a standard matching procedure.  Having fitted this cut-and-paste construction into a more general framework allows us to
  generalize the results of Penrose by obtaining the most general shell (with non-vanishing energy, energy flux and pressure) generated by matching two Minkowski regions. 
This generalization allows us to understand how the shell pressure affects the organization of the boundary points.
It turns out
that a positive (resp. negative) pressure produces compression (resp. stretching) of points on one of the sides and that this entails energy increase (resp. decrease). The  energy of the shell shows an accumulative behaviour and
only varies when this compression/stretching is taking place.
In a subsequent work we intend to
analyze 
in terms of the Darmois matching conditions (and eventually extend to more general matter contents)
other explicit cut-and-paste constructions such as the spherical fronted waves of Penrose or the constructions by Podolsk{\`y} and collaborators
\cite{podolsky1999nonexpanding}, \cite{podolsky2014gyratonic}, \cite{podolsky2017penrose}, \cite{podolsky2019cut}.



Given two spacetimes $\lp\Mpm,g^{\pm}\rp$ with null  boundaries $\Omega^{\pm}$, we demonstrate that the core problem of the existence of the matching lies on the solvability of one of the junction conditions, which establishes an isometry condition between any spatial section on $\Omega^-$ and the submanifold of $\Omega^+$ with which it is identified. This isometry must be universal along each null generator. More precisely, the matching is possible if and only if, in addition to a necessary causal restriction at the boundaries $\Omega^{\pm}$, there exists a diffeomorphism $\Psi$ between the set of null generators of $\Omega^-$ and the set of null generators of $\Omega^+$ as well as a scalar function $H$, called step function, which geometrically corresponds to a shift along the null generators. The map $\Psi$ and the step function $H$ fix completely the identification $\Phi$ between the boundaries. The fundamental restriction on these maps arising from the matching conditions is that any spacelike section on $\Omega^-$ must be isometric to its image under $\Phi$. We prove that the whole matching is determined by the step function and the diffeomorphism $\Psi$ and obtain the explicit form of the energy-momentum tensor of a general null shell in terms of them and ambient geometrical objects defined on $\Omega^{\pm}$.

It is to be expected that, generically, two given spacetimes $\lp\Mpm,g^{\pm}\rp$ with null boundaries cannot be matched across them. Our results are in agreement with this expectation. We materialize this rigidity in the matching in a uniqueness statement, namely that when the matching between $\lp\Mpm,g^{\pm}\rp$ is feasible, there exists generically one unique way of matching, i.e. there is only one suitable identification of the boundary points and the full tangent spaces so that the metric hypersurface data of both sides agree. However, in some cases one can perform not only more than one matching but even infinite. This freedom is discussed in section \ref{secstepfunct}, happens when the boundaries $\Omega^{\pm}$ are totally geodesic embedded null hypersurfaces and translates into the existence of an infinite set of step functions $H$ leading to successful matchings. Different step functions give rise to different shells, with different energy-momentum tensors. Imposing specific conditions of the energy and momentum on the shell equations provide differential equations that restrict the form of the corresponding step functions.

The organization of the paper is as follows. Section 2 is devoted to
recalling and extending several facts on the geometry of null hypersurfaces. We introduce some geometric objects that are required to develop the matching formalism, together with several identities that they satisfy. As already mentioned, we let these geometric quantities to be  flexible so that the results can accommodate  different needs when an explicit matching is to be performed.  In Section 3, we derive the shell junction conditions and determine
    the constraint that arises from the condition that the two rigging vector fields that define the identification of transversal directions point into the same side of the matching hypersurface.
  Section 4 is devoted to the derivation and properties of the step function and the diffeomorphism $\Psi$ mentioned before. We explicitly compute the components of the energy-momentum tensor of a general null shell in Section 5. 
  The general behaviour of the energy-momentum tensor under transformations of the riggings is also discussed, which allows us to perform a non-trivial check on our result (see Appendix A). In Section 6, we find the most general null shell in the context of the matching of two Minkowski regions 
and recover the results from the first cut-and-paste construction by Penrose (see e.g. \cite{dewitt1968battelle}) as a particular case.


Throughout this paper we assume the spacetimes to be $\lp n+1\rp-$dimensional and adopt the following index convention:
\begin{align}
&\alpha,\beta,...=0,1,...,n, && i,j,...=1,2,...,n, && I,J,...=2,3,...,n.
\end{align}

\section{Geometric structure of null hypersurfaces}
As already indicated in the introduction, this paper is devoted to studying the matching between two spacetimes with null boundaries. In order to address the problem, we need to recall some notions on the geometry of null hypersurfaces. General references for the topic are
\cite{galloway2004null}, \cite{gourgoulhon20063+}.
\begin{defn}
  
  Let $\lp\mathcal{M},g\rp$ be an $\lp n+1\rp-$dimensional pseudo-riemannian manifold and $\Sigma$ a manifold of dimension $n$. An embedded null hypersurface is a subset $\Omega\subset\mathcal{M}$ satisfying that there exists an embedding $\bs{\Phi}:\Sigma\rightarrow\mathcal{M}$ such that $\bs{\Phi}\lp \Sigma\rp=\Omega$ and that the first fundamental form $\gamma$ of $\Sigma$, defined by $\gamma = \Phi^*\lp g\rp$, is degenerate.
\end{defn}
As usual, we use $\Phi^*$ and $\td\Phi$ to refer to the pull-back and push-forward of $\Phi$ respectively, $T_p\mathcal{M}$ denotes the tangent space of $\mathcal{M}$ at the point $p \in\mathcal{M} $ and $T^*_p\mathcal{M}$ its dual, or cotangent, space, and similarly for $\Sigma$ and $\Omega$. Given two points $q\in\Sigma$ and $p=\Phi\lp q\rp\in\Omega$, the tangent plane to the embedded hypersurface, $T_{p}\Omega$, is the $n-$dimensional space defined as $T_{p}\Omega=\td\Phi\vert_q\lp T_q\Sigma\rp$. Its orthogonal space, i.e. the space of vectors that are orthogonal to all those on $T_p\Omega$, is written as $\lp T_p\Omega\rp^{\perp}$. As always, $T\mathcal{M}$, $T\Omega$ and $T\Sigma$ denote the corresponding tangent bundles.

It is well-known (see e.g. \cite{gourgoulhon20063+})
that there exists only one degenerate direction on $\Omega$ and that all other directions tangent to the null hypersurface are spacelike. Let $\hat{k}$ be any nowhere zero vector field along the degenerate direction of $\Omega$, which by definition means
\begin{equation}
\label{eqA1}
\hat{k}\vert_p \neq 0, \qquad \la \hat{k},X\rag\vert_p=0, 
\end{equation}
for any vector $\ld X\rv_p\in T_p\Omega$. Thus, $\hat{k}\vert_p\in T_p\Omega\cap\lp T_p\Omega\rp^{\perp}$. Since $\td\Phi$ is of maximal rank, the dimension of $T_{p}\Omega$ is $n$ so the dimension of $\lp T_p\Omega\rp^{\perp}$ is 1. It follows that $\lp T_p\Omega\rp^{\perp}\subset T_p\Omega$ and hence $\la \hat{k}\vert_p\rangle=\lp T_p\Omega\rp^{\perp}$ and all the vectors in $\lp T_p\Omega\rp^{\perp}$ are null. We let $T \Omega^{\perp} =\bigcup_p (T_p \Omega)^{\perp}$. It is clear that this is a subbundle to $T \Omega$. A null generator $k$ of $\Omega$ is defined to be a nowhere zero
section\footnote{$\Gamma\lp E\rp$ denotes the sections of any bundle $E$.}   $ k\in \Gamma \lp T\Omega^{\perp}\rp$. Null generators are all proportional to each other.

It is well-known (see e.g. equation (2.21) in \cite{gourgoulhon20063+}) that a null generator $k$
is necessarily geodesic (not necessarily affinely parametrized). Its surface gravity $\kappa_{k}$  is
defined by
\begin{equation}
\label{kappadef}
\nabla_kk=\kappa_kk,
\end{equation}
where $\nabla$ denotes the Levi-Civita covariant derivative of $g$.

The matching requires an identification between points and tangent spaces of two embedded null hypersurfaces. It will be convenient to describe these identifications in terms of sections of the null hypersurface. In fact, they will also help clarifying the physical properties of the matching. We define the concepts of spacelike section, tangent plane and foliation of an embedded null hypersurface as follows.
\begin{defn}
\label{def1}
Let $\Omega$ be an embedded null hypersurface and $k$ a null generator. Suppose the existence of a function $s:\Omega\rightarrow\mathbb{R}$ such that $ k\lp s\rp\neq0$ everywhere in $\Omega$. Then, the section $S_{s_0}$ is defined as the subset
\begin{equation}
S_{s_0}:=\lb p\in\Omega\textup{ }\vert\textup{ }s\lp p\rp=s_0,\textup{ }s_0\in\mathbb{R}\rb.
\end{equation}
Given $p \in \Omega$ and the section $S_{s\lp p\rp} \subset \Omega$ that contains $p$, the tangent plane $T_pS_{s\lp p\rp}$ is defined as
\begin{equation}
T_pS_{s\lp p\rp}:=\lb X\in T_p\Omega\textup{ }\vert\textup{ } X\lp s\rp=0\rb.
\end{equation}
All the above requirements guaranteed, the family of spacelike sections $\lb S_s\rb$ define a foliation of $\Omega$ given by the levels of $s$, i.e. the subsets of constant $s$.
\end{defn}
We emphasize that the existence of the function $s$ in Definition \ref{def1}
entails a global restriction on $\Omega$, since it automatically follows that all sections associated to $s$ are diffeomorphic to each other, and that the
topology of $\Omega$ is $S_{s_0} \times \mathbb{R}$.  We shall assume this global restriction throughout the paper. We
stress, however, that locally near any point, the existence of the function $s$ can always be granted, so all the results in the paper of a purely local nature are valid in full generality.

Given one foliation, there exists one unique null generator $k$ satisfying that $k\lp s\rp=1$. This choice will be always assumed in this paper.

A fundamental result of submanifolds is that the Lie bracket of two tangent vectors is also tangent (see e.g. \cite{lee2003introduction}). Thus $X,Y\in \Gamma\lp T\Omega\rp$  implies
$\lc X,Z\rc \in \Gamma\lp T\Omega\rp$. Given some neighbourhood $\cu\subset\Omega$, when $X,Y$ are in addition tangent to the sections defined by $s$
which we write $X,Y\in \stackbin[p\in\mathcal{U}]{}{\bigcup}T_pS_{s\lp p\rp}$, then also
$\lc X,Y\rc\in \stackbin[p\in\mathcal{U}]{}{\bigcup}T_pS_{s\lp p\rp}$. From now on, given a foliation in $\Omega$ defined by $s$, we will use $\lb k, v_I\rb$ to refer to any basis of $T\Omega$ satisfying the following properties:
\begin{equation}
\label{basis}
\begin{array}{cl}
\textup{(A)} & k\textup{ is the null generator satisfying }k\lp s\rp=1.\\
\textup{(B)} & \textup{Each }v_I\textup{ is a vector field (necessarily spacelike) verifying that }v_I\vert_p\in T_pS_{s\lp p\rp}\textup{ at each }p\in\Omega.\\
\textup{(C)} & k\textup{ and }v_I\textup{ verify that }\lc k,v_I\rc=0\textup{ and }\lc v_I,v_J\rc=0.\\
\end{array}
\end{equation}
When $\Omega$ is the boundary of a  spacetime, it is automatically two-sided and hence it always admits
(see Lemma 1 in \cite{mars2013constraint}) a transversal vector field $L_0$, i.e. satisfying
$L_0 \notin T_p\Omega, \forall p \in \Omega$. If in addition $\Omega$ is null, this transverse vector field
can always be taken to be null everywhere. Indeed, $L_0$ being transversal means that $\la L_0, k \ra \neq 0$
everywhere and then
\begin{align*}
  L :=  L_0 - \frac{\la L_0, L_0\rag}{2 \la L_0, k \rag } k
\end{align*}
is both null and transversal.  We select one such vector
(null transversal vector fields are highly non-unique) and choose a basis $\lb k,v_I\rb$ of $T\Omega$. 
We introduce the $n$ scalar functions $\nfi$ and $\psi_I$  on $\Omega$ defined by 
\begin{equation}
\label{eqA30}
\nfi\lp p\rp:=-\ld\la L,k\rag\rv_p \qquad \psi_I\lp p\rp:=- \ld\la L,v_I\rag\rv_p
\end{equation}
and we observe that necessarily $\nfi\lp p\rp\neq0$. These functions obviously depend on the choice of  the basis $\lb k, v_I\rb$ and $L$. For the sake of simplicity we do not reflect this dependence in the notation. On the other hand, it may seem strange not to restrict $L$ to satisfy $\psi_I=0$, i.e. to be orthogonal to the leaves of the foliation. The reason is that there can be many cases where the most convenient choice of $L$ (e.g. to simplify the computations) does not verify $\psi_I=0$. Thus, we keep these functions completely free a priori. An explicit example  where
choosing $L$ not orthogonal to the leaves turned out to be useful appears in \cite{podolsky2017penrose}. Actually, the functions $\psi_I$
in that paper happen to be the currents $J\lp\cu,\e,\be\rp$ and $\bar{J}\lp\cu,\e,\be\rp$ which play a fundamental role in the
physical description of the impulsive gravitational wave associated to  the matching.

\subsection{Tensors on an embedded null hypersurface $\Omega$}

Since the first fundamental form is degenerate (i.e. it does not admit an inverse), there is no natural way of raising and lowering indices of tensors on $\Omega$. The standard way of dealing with this difficulty is to introduce a quotient structure (see e.g. \cite{galloway2004null}) by defining for any $Z,W\in T_p\Omega$ the equivalence relation $\sim$ as 
\begin{equation}
Z\sim W\quad\Longleftrightarrow\quad Z-W=\beta k,
\end{equation}  
where $\beta\in\mathbb{R}$.
\begin{defn}
Let $\Omega$ be an embedded null hypersurface and $p\in\Omega$. Then the quotient vector space $T_p\Omega/k$ is defined as
\begin{equation}
T_p\Omega/k:=\lb \bar{Z}:Z\in T_p\Omega\rb,
\end{equation}
where $\bar{Z}:=\lb X\in T_p\Omega:X\sim Z\rb$. The fiber bundle $T\Omega/k$ is the natural $\lp n-1\rp-$dimensional vector space given by
\begin{equation}
T\Omega/k:=\stackbin[p\in\Omega]{}{\bigcup}T_p\Omega/k.
\end{equation}
\end{defn}
This quotient structure of $T_p\Omega$ allows to construct a metric and a second fundamental form on $T\Omega/k$. The metric, denoted by $\h$, is the symmetric $2-$covariant tensor defined by 
\begin{equation}
\h\lp \bar{Z},\bar{W}\rp\vert_p:=\ld\la Z,W\rag\rv_p ,
\end{equation}
where $\bar{Z},\bar{W}\in T_p\Omega/k$. The tensor is well-defined because the
right-hand side is independent of the representatives $Z \in \bar{Z},
W \in \bar{W}$. Besides, for any non-zero
$\bar{Y}\in T_p\Omega/k$ (i.e. classes associated to
spacelike directions $Y$), it is satisfied that $ \h\lp \bar{Y},\bar{Y}\rp\vert_p=\la Y,Y\rag\vert_p>0$. Thus, $\h$ is a positive definite metric. Given $p_0\in\Omega$, a section $S_{s\lp p_0\rp}$ and any $p \in
S_{s\lp p_0\rp}$, $\h|_{p}$ is isometric to the induced metric $h$ of $S_{s\lp p_0\rp}$ at $p$. Indeed, the map $T_p S_{s\lp p_0\rp}\longrightarrow T_{p_0}\Omega/k$ defined by $X\longrightarrow\bar{X}$ is an isomorphism\footnote{Both spaces have the same dimension and the kernel is obviously $\{0\}$ because $k\lp s\rp\neq0$.} and for any two vectors $Z,W \in T_p S_{s\lp p_0\rp}$ it holds 
\begin{align}
\h\lp \bar{Z},\bar{W}\rp\vert_p=\la Z+a k,W+bk\rag\vert_p=\la Z,W\rag\vert_p\equiv h\lp Z,W\rp\vert_p.
\label{equalityhbarh}
\end{align}
Thus, $h$ is also positive definite and we denote by $\hup$ its associated contravariant metric. Their components in a basis $\lb v_I\vert_p\rb$ of $T_p S_{s\lp p_0\rp}$ and its corresponding dual $\lb \omega^I\vert_p\rb$ are denoted by $h_{IJ}$ and $h^{IJ}$ respectively. We use these tensors to lower  and raise  capital
indices, irrespectively of whether they are tensorial,
e.g. in $\psi_I$, or identify elements in a set, e.g. in $v_I$.

The second fundamental form with respect to $k$, denoted by $\bs{\hat{\chi}}^k$, is the $2-$covariant tensor on $T_p\Omega/k$ defined by 
\begin{equation}
\begin{array}{lcl}
\bs{\hat{\chi}}^k\lp \bar{Z},\bar{W}\rp\vert_p:=\la \nabla_{Z}k,W\rag \vert_p, & &\bar{Z},\bar{W}\in T_p\Omega/k.
\end{array}
\end{equation}
Again, this tensor is well-defined. i.e. independent of the choice of representatives, and  closely related to the second fundamental form $\bs{\chi}^k$ of $S_{s\lp p_0\rp}$ with respect to the normal $k$. Specifically, for any two vectors $Z,W\in T_pS_{s\lp p\rp}$, it holds
\begin{eqnarray}
\nonumber \bs{\hat{\chi}}^k\lp \bar{Z},\bar{W}\rp\vert_p&=&\la \nabla_{Z+ak}k,W+bk\rag\vert_p=\la \nabla_{Z+ak}k,W\rag\vert_p=\la \nabla_{Z}k,W\rag\vert_p+ a\la \nabla_{k}k,W\rag\vert_p\\
\label{moreeqs}&=&\la \nabla_{Z}k,W\rag\vert_p+a\kappa_k\la k,W\rag\vert_p=\la \nabla_{Z}k,W\rag\vert_p\equiv\bs{\chi}^k\lp Z,W\rp\vert_p.
\end{eqnarray}
For later use, we recall the following relation \cite{gourgoulhon20063+} between the rate of change of the induced metric along $k$ and the second fundamental form of the section 
\begin{equation}
\label{eqA28}
k\lp  h\lp v_K,v_I\rp\rp=2\bs{\chi}^k\lp v_I,v_K\rp.
\end{equation}
This identity relies on the basis vector fields $v_I$ verifying $\lc k,v_I\rc=0$ and uses the fact that $\bs{\chi}^k$ is symmetric. 

With respect to the transversal null direction $L$, we define the 2-covariant tensor $\bs{\Theta}^L$ and the one-form $\bs{\sigma}_L$ at $p \in S_{s\lp p_0\rp}$ by
\begin{equation}
\label{somedefs}
\begin{array}{lcl}
\bs{\Theta}^L\lp Z, W\rp\vert_p:=\la \nabla_ZL,W\rag\vert_p, & &\bs{\sigma}_L\lp Z\rp\vert_p:=\dfrac{1}{\nfi}\la \nabla_Zk,L\rag\vert_p,
\end{array}
\end{equation}
where $Z,W\in T_pS_{s\lp p_0\rp}$ and $\nfi$ is given by (\ref{eqA30}). Note that since we are not assuming $L$ to be normal to the section, $\bs{\Theta}^L$ is {\emph{not}} one of the second fundamental forms of the section. In fact, this tensor is not symmetric in general.

\subsection{Covariant derivatives along directions tangent to an embedded null hypersurface $\Omega$}

For later purposes, it is convenient to provide the explicit form of some covariant derivatives with respect to the vector fields $k$ and $v_I$. Since $\nabla_{k}k$ is given by (\ref{kappadef}) and $\nabla_{k}v_I=\nabla_{v_I}k$ (c.f. (\ref{basis})), we only require $\nabla_{v_I}v_J$, $\nabla_{v_I}k$, $\nabla_{v_I}L$ and $\nabla_kL$.
When $L$ is normal to the sections, the corresponding expressions can be found e.g.\ in \cite{roesch2016proof}, but we are not aware of a reference where the explicit expressions for general $L$ appear. Actually these expressions can be regarded as an
expanded form of equations (19) and (21) in
\cite{mars1993geometry}.
\begin{lem}
\label{lem1}
Let $\Omega$ be an embedded null hypersurface, $\lb S_s\rb$ a foliation of $\Omega$ defined by $s$ and $L$ a null vector field everywhere transversal to $\Omega$. Given a basis $\lb k,v_I\rb$ satisfying conditions (\ref{basis}), let $\lb \omega^I\vert_p\rb$ be the basis of $T^*_pS_{s\lp p\rp}$ dual to $\lb v_I\vert_p\rb$. Then the derivatives $\nabla_{v_I}v_J$, $\nabla_{v_I}k$, $\nabla_{v_I}L$ and $\nabla_{k}L$ take the following form:
\begin{align}
\label{eqA27a}&\nabla_{v_I}v_J=\frac{1}{\nfi}\bs{\chi}^k\lp v_I,v_J\rp L+\frac{1}{\nfi}\lp v_I\lp\psi_J\rp+\bs{\Theta}^L\lp v_I,v_J\rp-\chr_{JI}^K\psi_K\rp k+\chr_{JI}^Kv_K,\\
\label{eqA27b}&\nabla_{v_I}k=-\lp \bs{\sigma}_L\lp v_I\rp+\frac{1}{\nfi}\psi^B\bs{\chi}^k\lp v_I,v_B\rp\rp k+\bs{\chi}^k\lp v_I,v^B\rp v_B,\\
\label{eqA27c}&\nabla_{v_I}L=\eta_I  L-\frac{1}{\nfi}\psi^J\lp \eta_I \psi_J+\bs{\Theta}^{L}\lp v_I,v_J\rp\rp k+\lp \eta_I \psi^J+\bs{\Theta}^{L}\lp v_I,v^J\rp\rp v_J,\\
\label{derkL}&\nabla_kL=\lp \dfrac{k\lp\nfi\rp}{\nfi}-\kappa_k\rp \lp L-\dfrac{\psi^I\psi_I}{\nfi}k+\psi^I v_I\rp +\lp k\lp\psi_I\rp+\nfi\bs{\sigma}_L\lp v_I\rp\rp\lp \dfrac{\psi^I}{\nfi}k-v^I\rp,
\end{align}
where $\chr_{JI}^K$ is given by 
\begin{equation}
\label{eqA48}
\chr_{JI}^K=\lp\la v^K,\nabla_{v_I}v_J\rag+\frac{1}{\nfi}\psi^K\bs{\chi}^k\lp v_I,v_J\rp\rp,
\end{equation}
and $\eta_I$ is defined as
\begin{equation}
\label{eqA38}
\eta_I:=\lp \frac{1}{\nfi}v_I\lp\nfi\rp+\bs{\sigma}_L\lp v_I\rp\rp.
\end{equation}
\end{lem}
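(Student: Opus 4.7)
The plan is to prove the four identities by expanding each covariant derivative in the frame $\{k,v_I,L\}$ of $T_pM$ (defined at points of $\Omega$) and computing the components via inner products. The key algebraic point is that $\{k,v_I,L\}$ is a basis of the ambient tangent space whose Gram matrix, by construction, reads
\begin{equation*}
\la k,k\ra=0,\quad \la L,L\ra=0,\quad \la k,v_I\ra=0,\quad \la L,k\ra=-\nfi,\quad \la L,v_I\ra=-\psi_I,\quad \la v_I,v_J\ra=h_{IJ},
\end{equation*}
so that any vector $X\in T_pM$ written as $X=\alpha L+\beta k+\gamma^K v_K$ has its coefficients univocally recovered from
\begin{equation*}
\alpha=-\tfrac{1}{\nfi}\la X,k\ra,\qquad \gamma^K=h^{KB}\bigl(\la X,v_B\ra+\alpha\,\psi_B\bigr),\qquad \beta=-\tfrac{1}{\nfi}\bigl(\la X,L\ra+\gamma^K\psi_K\bigr).
\end{equation*}
This ``inversion formula'' is the single algebraic tool behind the lemma.

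I would then apply it successively to $X=\nabla_{v_I}v_J$, $\nabla_{v_I}k$, $\nabla_{v_I}L$ and $\nabla_kL$. For each choice, the six required pairings are obtained by combining the defining identities (\ref{kappadef}), (\ref{eqA30}) and (\ref{somedefs}) with metric compatibility of $\nabla$. For instance, for $\nabla_{v_I}v_J$ one has $\la\nabla_{v_I}v_J,k\ra=-\la v_J,\nabla_{v_I}k\ra=-\bs{\chi}^k(v_I,v_J)$ and $\la\nabla_{v_I}v_J,L\ra=v_I(-\psi_J)-\bs{\Theta}^L(v_I,v_J)$; substituting into the inversion formula immediately produces (\ref{eqA27a}) and identifies the tangential coefficient as the object $\chr_{JI}^K$ in (\ref{eqA48}). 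For $\nabla_{v_I}k$ one uses $\la\nabla_{v_I}k,k\ra=\tfrac12 v_I\la k,k\ra=0$ to kill the $L$-component and the definition of $\bs{\sigma}_L$ to rewrite $\la\nabla_{v_I}k,L\ra=\nfi\,\bs{\sigma}_L(v_I)$, giving (\ref{eqA27b}). For $\nabla_{v_I}L$, the compatibility relation $\la\nabla_{v_I}L,k\ra=-v_I(\nfi)-\nfi\,\bs{\sigma}_L(v_I)=-\nfi\,\eta_I$ together with $\la\nabla_{v_I}L,L\ra=0$ and $\la\nabla_{v_I}L,v_B\ra=\bs{\Theta}^L(v_I,v_B)$ reproduces (\ref{eqA27c}), where the scalar $\eta_I$ of (\ref{eqA38}) naturally appears as the $L$-coefficient.

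The derivation of (\ref{derkL}) follows the same template, with the extra input that $[k,v_I]=0$ (which is condition (C) of (\ref{basis})) is used to rewrite $\nabla_kv_I=\nabla_{v_I}k$ whenever needed, so that $\la\nabla_kL,v_J\ra=-k(\psi_J)-\la L,\nabla_{v_J}k\ra=-k(\psi_J)-\nfi\,\bs{\sigma}_L(v_J)$; the $L$-coefficient computes to $\alpha=k(\nfi)/\nfi-\kappa_k$, and after collecting the $k$-coefficient via the inversion formula one obtains the compact expression displayed in the lemma. Finally, at each stage one checks that the ``raising/lowering'' with $h^{IJ},h_{IJ}$ is applied consistently, in particular that $\bs{\chi}^k(v_I,v^B)v_B=h^{BC}\bs{\chi}^k(v_I,v_C)v_B$ and similarly for $\bs{\Theta}^L(v_I,v^J)$ and $\psi^I$.

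The entire argument is essentially a bookkeeping exercise, so there is no conceptual obstacle; the only delicate point is to avoid sign and index errors when $L$ is \emph{not} orthogonal to the leaves (so that the $\psi_I$ do not vanish), since these are precisely the terms that mix the $k$- and $L$-components of the various derivatives. The choice to first compute the $L$-component ($\alpha$), then the spatial components ($\gamma^K$), and finally the $k$-component ($\beta$) — as dictated by the block structure of the inverse Gram matrix — keeps the computation linear and systematic.
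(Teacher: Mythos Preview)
Your proposal is correct and follows essentially the same approach as the paper: decompose each covariant derivative in the frame $\{L,k,v_K\}$ and recover the coefficients by contracting successively with $k$, $v_B$, and $L$, using metric compatibility together with the definitions of $\bs{\chi}^k$, $\bs{\Theta}^L$, $\bs{\sigma}_L$ and the relation $[k,v_I]=0$. The only difference is presentational --- you package the three contractions into a single ``inversion formula'' at the outset, whereas the paper repeats the extraction of $\alpha$, $\gamma^K$, $\beta$ explicitly for each of the four derivatives --- but the computations and the logical flow are identical.
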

\begin{rem}Note that the vector field $ L-\dfrac{\psi^I\psi_I}{\nfi}k+\psi^I v_I$ is orthogonal to both $v_J$ and $L$, whereas $\dfrac{\psi_I}{\nfi}k-v_I$ is orthogonal to $L$.
\end{rem}

\begin{proof}
We start with $\nabla_{v_I}v_J$. For suitable scalar functions $\alpha_{IJ}$, $\beta_{IJ}$ and $\chr_{JI}^K$, this derivative can be expressed as $\nabla_{v_I}v_J=\alpha_{IJ}L+\beta_{IJ}k+\chr_{JI}^Kv_K$. Using (\ref{eqA30}), it follows that
\begin{align}
\nonumber &\la k,\nabla_{v_I}v_J\rag=-\alpha_{IJ}\nfi,&&\Longleftrightarrow&&\alpha_{IJ}=-\frac{1}{\nfi}\la k,\nabla_{v_I}v_J\rag,&\\
&\la v_L,\nabla_{v_I}v_J\rag=-\alpha_{IJ}\psi_L+\chr_{JI}^Kh_{KL},&&\Longleftrightarrow &&\chr_{JI}^K=\la v^K,\nabla_{v_I}v_J\rag+\alpha_{IJ}\psi^K,&\\
\nonumber &\la L,\nabla_{v_I}v_J\rag=-\beta_{IJ}\nfi-\chr_{JI}^K\psi_K, &&\Longleftrightarrow && \beta_{IJ}=-\frac{1}{\nfi}\lp\la L,\nabla_{v_I}v_J\rag+\chr_{JI}^K\psi_K\rp,&
\end{align}
so that, taking into account (\ref{moreeqs}) and (\ref{somedefs})
\begin{align}
\label{eqA33}
\nonumber &\alpha_{IJ}=\frac{1}{\nfi}\la v_J,\nabla_{v_I}k\rag= \frac{1}{\nfi}\bs{\chi}^k\lp v_I,v_J\rp,\\
&\chr_{JI}^K=\la v^K,\nabla_{v_I}v_J\rag+\frac{1}{\nfi}\psi^K\bs{\chi}^k\lp v_I,v_J\rp,\\
\nonumber &\beta_{IJ}=-\frac{1}{\nfi}\lp v_I\lp\la L,v_J\rag\rp-\la \nabla_{v_I}L,v_J\rag+\chr_{JI}^K\psi_K\rp=\frac{1}{\nfi}\lp v_I\lp\psi_J\rp+\bs{\Theta}^L\lp v_I,v_J\rp-\chr_{JI}^K\psi_K\rp.
\end{align}
Concerning $\nabla_{v_I}k$, we decompose $\nabla_{v_I}k=\alpha_IL+\beta_Ik+{\varepsilon}_I^Lv_L$ and find 
\begin{align}
\nonumber &{\alpha}_I=-\frac{1}{\nfi}\la k,\nabla_{v_I}k\rag=0,\\
&{\varepsilon}_I^L=\la v^L,\nabla_{v_I}k\rag=\bs{\chi}^k\lp v_I,v^L\rp,\\
\nonumber &{\beta}_I=-\frac{1}{\nfi}\lp\la L,\nabla_{v_I}k\rag+{\varepsilon}_I^K\psi_K\rp=-\lp\bs{\sigma}_L\lp v_I\rp+\frac{1}{\nfi}\psi^J \bs{\chi}^k\lp v_I,v_J\rp\rp.
\end{align}
Substituting in $\nabla_{v_I}k$ gives (\ref{eqA27b}). On the other hand, decomposing $\nabla_{v_I}L=\mu_IL+\nu_Ik+{\rho}_I^Lv_L$ one obtains
\begin{align}
\nonumber &\mu_I=-\frac{1}{\nfi}\la k,\nabla_{v_I}L\rag=-\dfrac{1}{\nfi}v_I\lp \la k,L\rangle_g\rp+\dfrac{1}{\nfi}\la \nabla_{v_I}k,L\rangle_g=\dfrac{1}{\nfi}v_I\lp \nfi\rp+\dfrac{1}{\nfi}\la \nabla_{v_I}k,L\rangle_g,\\
&{\rho}_I^L=\mu_I\psi^L+\la v^L,\nabla_{v_I}L\rag,\\
\nonumber &\nu_I=-\frac{1}{\nfi}\la L,\nabla_{v_I}L\rag-\frac{1}{\nfi}{\rho}_I^K\psi_K=-\frac{1}{\nfi}{\rho}_I^K\psi_K.
\end{align}
Using the definitions (\ref{somedefs}) and (\ref{eqA38}) and inserting the results into $\nabla_{v_I}L$ proves (\ref{eqA27c}). Finally, writing $\nabla_kL=aL+bk+c^Iv_I$ yields 
\begin{align}
\label{d1}\la k,\nabla_kL\rangle_g&=-k\lp\nfi\rp+\kappa_k\nfi=-a\nfi,\\
\label{d2}\la L,\nabla_kL\rangle_g&=0=-b\nfi-c^I\psi_I,\\
\label{d3}\la v_J,\nabla_kL\rangle_g&=-k\lp\psi_J\rp-\la L,\nabla_kv_J\rangle_g=-k\lp\psi_J\rp-\nfi\bs{\sigma}_L\lp v_J\rp=-a\psi_J+c_J.
\end{align}
Equation (\ref{d1}) immediately provides $a$, while from (\ref{d2}) one gets $b=-\frac{c^I\psi_I}{\nfi}$. Multiplying (\ref{d3}) by $h^{JK}$ gives $c^I=\lp\frac{k\lp\nfi\rp}{\nfi}-\kappa_k\rp\psi^I-h^{IJ} k\lp\psi_J\rp-\nfi\bs{\sigma}_L\lp v^I\rp$, and the substitution of $a,b,c^I$ on $\nabla_kL$ proves (\ref{derkL}).
\end{proof}

\section{General matching of spacetimes across a null hypersurface}
\subsection{Metric hypersurface data and hypersurface data}%

Let us start by introducing the concepts of metric hypersurface data and hypersurface data as defined in \cite{mars2013constraint}, \cite{mars2020hypersurface} and which provide a convenient setup to study shells of arbitrary causal character. Let $\Sigma$ be an $n-$dimensional manifold endowed with a $2-$symmetric covariant tensor $\gamma$, a $1-$form $\bs{\ell}$ and a scalar function $\ell^{(2)}$. The four-tuple $\lb\Sigma,\gamma,\bs{\ell},\ell^{(2)}\rb$ defines \textbf{metric hypersurface data} provided that the symmetric $2-$covariant tensor $\bs{\mathcal{A}}\vert_p$ on $T_p\Sigma\times\mathbb{R}$, called ambient metric and defined as
\begin{equation}
\label{ambientmetric}
\ld{\mathcal{A}}\rv_p\lp\lp W,a\rp,\lp Z,b\rp\rp:=\ld\gamma\rv_p\lp W,Z\rp+a\ld\bs{\ell}\rv_p\lp Z\rp+b\ld\bs{\ell}\rv_p\lp W\rp+ab \ell^{(2)}\vert_p, \qquad W,Z\in T_p\Sigma, \qquad a,b\in\mathbb{R},
\end{equation}
has Lorentzian signature at every $p\in\Sigma$. The five-tuple $\lb \Sigma,\gamma,\bs{\ell},\ell^{(2)},\bs{Y}\rb$ defines \textbf{hypersurface data} if $\bs{Y}$ is a symmetric $2-$covariant tensor field along $\Sigma$ and $\lb \Sigma,\gamma,\bs{\ell},\ell^{(2)}\rb$ is metric hypersurface data.

The abstract notion of (metric) hypersurface data connects to the geometry of hypersurfaces via the concept of ``embedded (metric) hypersurface data" defined as follows \cite{mars2020hypersurface}. The data $\lb\Sigma,\gamma,\bs{\ell},\ell^{(2)}\rb$ is \textbf{embedded} in a pseudo-riemannian manifold $\lp \mathcal{M},g\rp$ of dimension $n+1$ if there exists an embedding $\Phi:\Sigma\rightarrow\mathcal{M}$ and a vector field $\xi$ along $\Phi\lp\Sigma\rp$ and everywhere transversal to $\Sigma$, known as rigging, satisfying 
\begin{equation}
\label{emhd}
\Phi^*\lp g\rp=\gamma, \qquad\Phi^*\lp g\lp\xi,\cdot\rp\rp=\bs{\ell}, \qquad\Phi^*\lp g\lp\xi,\xi\rp\rp=\ell^{(2)}.
\end{equation}
The hypersurface data $\lb\Sigma,\gamma,\bs{\ell},\ell^{(2)},\bs{Y}\rb$ is embedded if besides $\lb\Sigma,\gamma,\bs{\ell},\ell^{(2)}\rb$ being embedded it holds
\begin{equation}
\label{eqcf}
\dfrac{1}{2}\Phi^*\lp \mathscr{L}_{\xi}g\rp=\bs{Y}.
\end{equation}
Of course, given some embedded hypersurface data with degenerate $\gamma$, the subset $\Phi\lp\Sigma\rp$ is an embedded null hypersurface on $\mathcal{M}$.

\subsection{Shell junction conditions}
Let $\lp \mathcal{M}^{\pm},g^{\pm}\rp$ be two $\lp n+1\rp-$dimensional time-oriented Lorentzian manifolds with respective
boundaries  $\Omega^{\pm}\subset\mathcal{M}^{\pm}$ and assume that $\Omega^{\pm}$ are (necessarily embedded) null
hypersurfaces. We make the global assumption on $\Omega^{\pm}$ mentioned above and require that
$\Omega^{\pm}$ admit a foliation $\big\{ S^{\pm}_{s^{\pm}}\}$ defined by a function $s^{\pm}$. We introduce a basis
$\{ L^{\pm},k^{\pm},v^{\pm}_I\big\}$ of $\Gamma\lp T\mathcal{M}^{\pm}\rp\vert_{\Omega^{\pm}}$ such that $L^{\pm}$ is a null vector field everywhere transversal to $\Omega^{\pm}$ and $\lb k^{\pm},v^{\pm}_I\rb$ satisfies conditions (\ref{basis}). For definiteness, we restrict $L^{\pm}$ and $k^{\pm}$ to be future directed. This entails no loss of generality and simplifies the discussion on existence of the matching, see below.

As discussed in the introduction, one of the  aims of this paper is to study what sort of spacetimes $(\mathcal{M},g)$ containing a null shell can be constructed by pasting $\mathcal{M}^{\pm}$ along $\Omega^{\pm}$.
It has also been mentioned that spacetimes of this kind have been studied either by means of the matching formalism or with the cut and paste method.
  In the standard matching formalism
the construction of a spacetime $(\mathcal{M},g)$ containing a null shell is only possible when the so-called shell junction conditions are satisfied. These are a set of equalities that provide information about the identification between points of $\Omega^+$ and $\Omega^-$ and between the tangent spaces $T\mathcal{M}^{\pm}$. $\Omega^{\pm}$ being embedded hypersurfaces, there exist two embedded metric hypersurface data $\big\{\Sigma^{\pm}, \gamma^{\pm},\bs{\ell}^{\pm},\ell_{\pm}^{(2)}\big\}$, two embeddings $\Phi^{\pm}$ satisfying that $\Phi^{\pm}\lp\Sigma^{\pm}\rp=\Omega^{\pm}$ and two vector fields $\xi^{\pm}\in \Gamma\lp T\mathcal{M}^{\pm}\rp$ everywhere transversal to $\Omega^{\pm}$ and verifying (\ref{emhd}). The key point is that the shell junction conditions impose that the two metric hypersurface data must coincide. Thus, from now on we will only deal with a single metric hypersurface data denoted by $\lb \Sigma,\gamma, \bs{\ell},\ell^{(2)}\rb$ both for $\Omega^+$ and $\Omega^-$.
Since the boundaries of $\Mpm$ are null, the tensor $\gamma$ is degenerate at every point with a single degeneration direction.

An important aspect to bear in mind is that, given two manifolds with boundary, determining whether they can be matched amounts to finding two embeddings of an abstract manifold $\Sigma$ onto their respective boundaries, in such a way that the shell matching conditions are fulfilled (i.e. that the corresponding metric hypersurface data agree). See Figure \ref{fig1} for a schematic picture of the construction.  The embeddings and the rigging vectors are not known or given a priori. In many circumstances such embeddings do not exist, and then the two spacetimes simply cannot be matched across their boundaries. In other cases, there exists multiple (even infinite) possible embeddings, giving rise to multiple joined spacetimes, which in general are different from each other (see the discussion of this point in section 4). When the shell junction conditions are satisfied, the geometry of the shell is determined by the jump of the transverse tensors $\bs{Y}^{\pm}$ defined as (c.f. (\ref{eqcf}))
\begin{equation}
\label{eqA21}
\bs{Y}^{\pm}:=\dfrac{1}{2}{\Phi^{\pm}}^*\lp \mathscr{L}_{\xi^{\pm}}g^{\pm}\rp.
\end{equation}
We therefore introduce the tensor
\begin{equation}
\label{eqA22}
\bs{V}:=\lc \bs{Y}\rc:=\bs{Y}^+-\bs{Y}^-.
\end{equation} 

\begin{figure}[h!]
\centering
\psfrag{A}{$\Mp$}
\psfrag{B}{$\Ml$}
\psfrag{C}{$\Omega^-$}
\psfrag{D}{$\Omega^+$}
\psfrag{E}{$\bs{\Phi}$}
\psfrag{F}{$\Phi^-$}
\psfrag{G}{$\Phi^+$}
\psfrag{H}{$\Sigma$}
\includegraphics[scale=0.075]{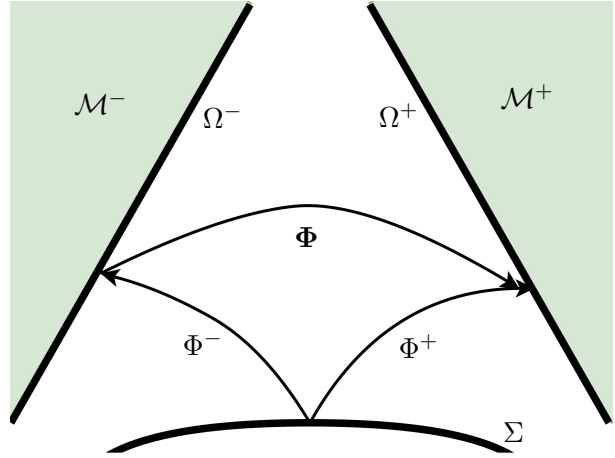}
\caption{Abstract setup of the matching of two $\lp n+1\rp-$dimensional spacetimes $\lp\Mpm,g^{\pm}\rp$ across their null boundaries $\Omega^{\pm}$, $\Sigma$ being the abstract $n-$dimensional manifold satisfying $\Phi^{\pm}\lp\Sigma\rp=\Omega^{\pm}$.}
\label{fig1}
\end{figure}

\subsection{Metric hypersurface data calculations}
A key object to study the problem is $\bs{\Phi}:=\Phi^+\circ({\Phi^-})^{-1}$. In fact, most of the information about the matching is contained in $\bs{\Phi}$ since it provides the identification between points of $\Omega^-$ and $\Omega^+$ and hence between the tangent spaces $T_{p}\Omega^-$ and $T_{\bs{\Phi}\lp p\rp}\Omega^+$. The identification of the full tangent spaces $T_p \mathcal{M}^-$ and $T_{\bs{\Phi}\lp p\rp}\mathcal{M}^+$ is performed via the identification of riggings. The embedding and the rigging on one side can be chosen freely and the matching problem requires determining the embedding and the rigging on the other side such that the metric hypersurface data on each side agree. It is important to stress \cite{mars2007lorentzian} 
that the existence of the rigging on the second side is a step that cannot be overlooked, see below.

Let $\lb \lambda,y^A\rb$ be a coordinate system on a neighbourhood of some point $q\in\Sigma$, where $\lambda$ is a coordinate along the degenerate direction of $\gamma$ and $y^A$ are spacelike coordinates. Applying the push-forward $\td\Phi^{\pm}$ to the associated coordinate vectors defines a basis  $\{ e^{\pm}_1\vert_{\Phi^{\pm}\lp q\rp}=\td \Phi^{\pm}\vert_q\lp\cp_{\lambda}\rp,e^{\pm}_I\vert_{\Phi^{\pm}\lp q\rp}=\td \Phi^{\pm}\vert_q\lp\cp_{y^I}\rp\}$ of $T_{\Phi^{\pm}\lp q\rp}\Omega^{\pm}$. Since the metric hypersurface data for $\Omega^-$ and $\Omega^+$ must coincide, (\ref{emhd}) imposes
\begin{align}
\label{sjc1}&\gamma_{ij}=\la e^-_i,e^-_j\ragm=\la e^+_i,e^+_j\ragp,\\
\label{sjc2}&\ell_{i}=\la e^-_i,\xi^-\ragm=\la e^+_i,\xi^+\ragp,\\
\label{sjc3}&\ell^{(2)}=\la\xi^-,\xi^-\ragm=\la\xi^+,\xi^+\ragp.
\end{align} 
As already mentioned, the embedding and the rigging on one side are freely specifiable. We therefore adapt $\Phi^-$ and $\xi^-$ to the geometric quantities we have introduced along $\Omega^-$, and let all the information of the matching be contained in $\Phi^+$ and $\xi^+$. Thus, without loss of generality we set
\begin{equation}
\label{eiyl}
\begin{array}{lclcl}
e^-_1=k^-, & & e^-_I=v^-_I,& &\xi^-=L^-.
\end{array}
\end{equation}
Note that this choice automatically restricts the coordinate $\lambda$. In particular, it fixes its orientation since $\lambda$ must increase to the future along the generators.

Particularizing (\ref{sjc1}) for $i=j=1$  one gets $\la e^+_1,e^+_1\ragp=0$, and hence $e^+_1$ must be proportional to the null generator $k^+$ of $\Omega^+$ and the conditions (\ref{sjc1}) for $i=1$ and $j=J$ are automatically satisfied. The vector fields $e^+_i$ and $\xi^+$ can be decomposed as
\begin{equation}
\label{eqA4}
\begin{array}{lclcl}
e^+_1=\zeta k^+, & &e^+_I=a_Ik^++b_I^Jv^+_J,&&\xi^+=\dfrac{1}{A}L^++Bk^++C^Kv^+_K,
\end{array}
\end{equation}
for suitable scalar functions $\zeta$, $a_I$, $b_I^J$, $A$, $B$ and $C^K$. In (\ref{eqA4}), we have written $1/A$ for later convenience and to emphasize that this coefficient cannot vanish (because the rigging $\xi^+$ is, by definition, transversal to $\Omega^+$). Inserting (\ref{eqA4}) and
defining $f:=\lp\nfi^-\circ\bs{\Phi}^{-1}\rp/\nfi^+$, the shell junction conditions (\ref{sjc1})-(\ref{sjc3})
take the form
\begin{align}
\label{junct1}h^-_{IJ}\vert_p &= b_I^Lb_J^Kh^+_{LK}\vert_{\bs{\Phi}\lp p\rp},\\
\label{junct2}\nfi^-\vert_p&=\dfrac{\zeta\nfi^+}{A}\Big\vert_{\bs{\Phi}\lp p\rp}\qquad\qquad\qquad\qquad\qquad\qquad\qquad\Longrightarrow\quad e^+_1=fAk^+,\\
\label{junct3}-\psi^-_I\vert_p&=-\dfrac{1}{A}\lp a_I\nfi^++b_I^J\psi^+_J\rp+C^Kb_I^Jh^+_{JK}\Big\vert_{\bs{\Phi}\lp p\rp},\\
\label{junct4}0 &= 2B\nfi^++2C^J\psi^+_J-AC^IC^Jh^+_{IJ}\Big\vert_{\bs{\Phi}\lp p\rp}.
\end{align}
From now on, we make abuse of notation by writing $\nfi^-$ instead of $\nfi^-\circ\bs{\Phi}^{-1}$ on the $\lp\Mp,g^+\rp$ side.

\begin{figure}[t]
\centering
\psfrag{M+}{$\Mp$}
\psfrag{M-}{$\Ml$}
\psfrag{O-}{$\Omega^-$}
\psfrag{O+}{$\Omega^+$}
\psfrag{L+}{$\xi^+$}
\psfrag{L}{$\xi^-$}
\psfrag{(a)}{(a)}
\psfrag{(b)}{(b)}
\psfrag{(c)}{(c)}
\psfrag{(d)}{(d)}
\includegraphics[scale=0.1]{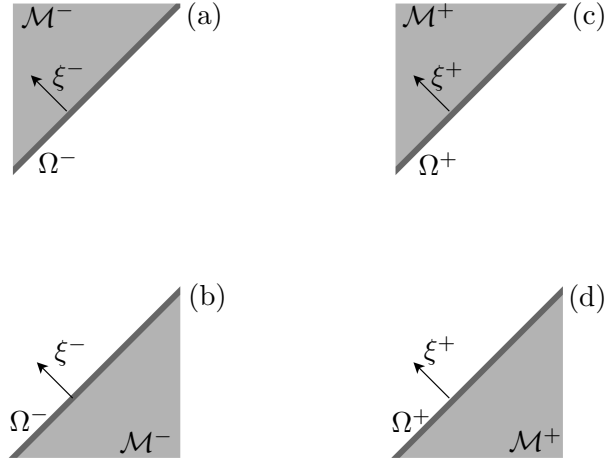}
\caption{Possible orientation of the rigging $\xi^+$ for future $\xi^-$ (see (\ref{eiyl}) and note that $L^-$ was chosen future). There exist two different situations: (a) $\Omega^-$ is past boundary and hence $\xi^-$ points inwards and (b) $\Omega^-$ is future boundary and therefore $\xi^-$ points outwards. The matching theory predicts that one rigging must point inwards and the other outwards (with respect to the spacetime boundary). Thus,  case (a) is only compatible with situation (d). Likewise, (b) can only be matched with (c). Consequently, the matching is feasible when $\Omega^{-}$ is future boundary and $\Omega^+$ is past boundary or vice versa.}
\label{figureorientations}
\end{figure}

Since in this paper we are assuming that the spacetimes to be matched are known and that the selection of the basis $\lb L^{\pm},k^{\pm},v^{\pm}_I\rb$ of $\Gamma\lp T\mathcal{M}^{\pm}\rp\vert_{\Omega^{\pm}}$ has already been made, the quantities $\psi_I^{\pm}$, $h_{IJ}^{\pm}$ and $\nfi^{\pm}$ in these expressions should be regarded as known. In such a setup, equations (\ref{junct3}) and (\ref{junct4}) allow one to solve uniquely for $B$ and $C^K$ in terms of the (still unknown) quantities $A$, $a_I$ and $b_I^K$.

The choice (\ref{eiyl}) together with (\ref{sjc3}) imposes both riggings  $\xi^{\pm}$ to be null. Given embeddings $\Phi^{\pm}$ and  $\xi^-$, equations (\ref{sjc2})-(\ref{sjc3}) admit at most one solution for $\xi^+$ (see Lemma 3 of \cite{mars2007lorentzian}). However, the existence of a solution is necessary but not quite sufficient to guarantee that the matching can be performed. The reason is that in order to be able to identify the riggings (and hence construct the matched spacetime), they both must point into the same side of the hypersurface once the matching has been carried out. This requires that the riggings must be such that if one points outwards (with respect to the spacetime boundary) then the other one must point inwards. Since there is no freedom whatsoever in the
solution of  (\ref{sjc1})-(\ref{sjc3}) when the boundaries are null\footnote{This is fundamentally different in the null and in the non-null cases, see \cite{mars2007lorentzian} and \cite{mars2020hypersurface}.} it may happen that the equations are solvable but the matching is impossible.

In order to describe when this obstruction is absent, we introduce the following terminology.
A null boundary $\Omega$ of a time-oriented spacetime $\lp\mathcal{M},g\rp$
is called \textbf{future (resp. past) boundary} if $\mathcal{M}$ lies in the past (resp. future) of $\Omega$.
In our setup we have chosen $L^-$ to be future and by \eqref{eiyl} so it is  $\xi^-$. There are only two possible situations. Either (a) $\Omega^-$ is a past boundary (and then
$\xi^-$ points inwards) or (b) $\Omega^-$ is future boundary (and $\xi^-$ points outwards), see
cases (a) and (b) in Figure \ref{figureorientations}.
In the first case, the matching can be performed only if $\xi^+$ points outwards and in the second when
$\xi^+$ points inwards. Since after the identification the causal orientation of the rigging is unique, this forces that in case (a) $\Omega^+$ must be a future boundary and in case (b)
$\Omega^+$ must be a past boundary. Using the terminology of Figure \ref{figureorientations}, the possible matchings are
(a)-(d) or (b)-(c). Since the names of the spacetimes to be matched can be swapped, we may assume without
loss of generality that the matching is of the (b)-(c) type, i.e. we assume from now that
$\Omega^-$ is a  future boundary and
$\Omega^+$ is a past boundary.

Having imposed also that $L^{\pm}, k^{\pm}$ are all future, the previous considerations imply that the matching is feasible if there exists a solution of (\ref{junct1})-(\ref{junct4}) with $A>0$ (had we selected e.g. $\xi^{-} = - L^-$, the corresponding condition to be imposed would be $A<0$).

\section{The step function $H\lp\lambda,y^A\rp$}
\label{secstepfunct}

The vector fields $e^{\pm}_i$ play a crucial role
when analysing the existence of $\bs{\Phi}$ and its determination. Let us start with basic facts. From the standard property $\td\Phi^{\pm}\lp\lc X,Y\rc\rp=\lc \td\Phi^{\pm}\lp X\rp,\td\Phi^{\pm}\lp Y\rp\rc$, $\forall X,Y \in \Gamma(T\Sigma)$, and given that $e_i^{\pm}$ are the push-forward of a coordinate basis, it must hold
\begin{equation}
\label{eqA56}
\begin{array}{lcl}
\lc e^{\pm}_i,e^{\pm}_j\rc=0.
\end{array}
\end{equation}
For $e_i^-$ these conditions give no extra information as $\{ k^-,v^-_I\}$ verify item (C) in (\ref{basis}). On the other hand, $e_i^+$ are not yet determined, so (\ref{eqA56}) provide useful information. Inserting (\ref{eqA4}) one easily finds
\begin{align}
  \label{eqA9b}0=\lc e^+_I,e^+_J\rc&=\Big( e^+_I\lp a_J\rp -e^+_J\lp a_I\rp \Big) k^++ \Big(e^+_I\lp b_J^K\rp -e^+_J\lp b_I^K\rp\Big) v^+_K, \\
\label{eqA9a} 0=\lc e^+_1,e^+_J\rc&=\Big( e^+_1\lp a_J\rp -e^+_J\lp \zeta\rp \Big) k^++e^+_1\lp b_J^I\rp v^+_I.
\end{align}
Setting each component equal to zero and using the definition of push-forward $\td \Phi^+(X) (u) = X (u \circ \Phi^+)$, we get\footnote{We make the harmless abuse of notation of calling $u \circ \Phi^+$ still as $u$.} 
\begin{align}
\label{eprop1}& e^+_I\lp a_J\rp =e^+_J\lp a_I\rp,\quad e^+_I\lp b_J^K\rp =e^+_J\lp b_I^K\rp, & &\Longleftrightarrow& &\dfrac{\cp a_J}{\cp y^I}=\dfrac{\cp a_I}{\cp y^J},\quad \dfrac{\cp b_J^K}{\cp y^I}=\dfrac{\cp b_I^K}{\cp y^J},&\\
\label{eprop2}&e^+_1\lp a_J\rp =e^+_J\lp \zeta\rp,\quad e^+_1\lp b_J^K\rp =0,& &\Longleftrightarrow&  &\dfrac{\cp a_J}{\cp\lambda}=\dfrac{\cp \zeta}{\cp y^J},\quad\dfrac{\cp b_J^K}{\cp\lambda}=0.&
\end{align}
It follows that, locally on $\Sigma$, there exist functions $H(\lambda,y^A)$ and $h^I(\lambda, y^A)$ such that 
\begin{align}
\label{eqA11a}&a_I=\dfrac{\cp H\lp \lambda, y^A\rp}{\cp y^I},& &\dfrac{\cp \zeta}{\cp y^J}=\dfrac{\cp a_J}{\cp\lambda}=\dfrac{\cp^2 H\lp \lambda, y^A\rp}{\cp\lambda\cp y^J},&\\
\label{eqA11b}&b_I^K=\dfrac{\cp h^K\lp \lambda, y^A\rp}{\cp y^I},& &b_I^K=b_I^K\lp y^A\rp.&
\end{align}
From (\ref{eqA11b}) we conclude that $h^I\lp\lambda,y^A\rp$ must decompose as
\begin{equation}
\label{setsolutions}h^{I}\lp\lambda,y^A\rp=h_{\lambda}^{I}\lp\lambda\rp+h_{y}^{I}\lp y^A\rp.
\end{equation}
The integration ``constant" $h_{\lambda}^I\lp\lambda\rp$ is irrelevant because it does not change $b_I^K$ or $a_I$, hence it affects neither $e^+_I$ nor the embedding $\Phi^+$. Thus we may set $h_{\lambda}^I\lp \lambda\rp=0$ without loss of generality and conclude $h^I=h^I\lp y^A\rp$. Concerning (\ref{eqA11a}), substitution of (\ref{junct2}) yields
\begin{equation}
  \dfrac{\cp }{\cp y^J}\lp fA-\dfrac{\cp H\lp \lambda, y^A\rp}{\cp\lambda}\rp=0, 
  \qquad \mbox{i.e.} \qquad
\dfrac{\cp H\lp\lambda, y^A\rp}{\cp\lambda}=fA+\eta\lp\lambda\rp, 
\label{solHFA}
\end{equation}
$\eta\lp\lambda\rp$ being an arbitrary function of $\lambda$. Again, $\eta\lp\lambda\rp$ plays no role since it does not affect $e^+_I$ (by (\ref{eqA4})) so we may set $\eta\lp \lambda\rp=0$. Thus, $A$ and $a_I$ can be written in terms of $H\lp\lambda,y^A\rp$ as
\begin{equation}
\label{HFA}
\dfrac{\cp H\lp\lambda,y^A\rp}{\cp\lambda}=fA,\qquad\qquad\dfrac{\cp H\lp\lambda,y^A\rp}{\cp y^I}=a_I.
\end{equation}

Since we assumed the basis $\{k^{\pm},v_I^{\pm}\}$ to satisfy conditions (\ref{basis}), the two foliations $\{ S^{\pm}_{s^{\pm}}\}$ defined by $s^{\pm}$ are such that $k^{\pm}\lp s^{\pm}\rp=1$ and $v_I^{\pm}\lp s^{\pm}\rp=0$. Consequently, it follows
\begin{align}
&e^+_1\lp s^+\rp=fAk^+\lp s^+\rp=fA=\dfrac{\cp H\lp\lambda,y^A\rp}{\cp \lambda},&  &e^-_1\lp s^-\rp=k^-\lp s^-\rp=1,&\\
&e^+_I\lp s^+\rp=a_Ik^+\lp s^+\rp=a_I=\dfrac{\cp H\lp\lambda,y^A\rp}{\cp y^I},&  &e^-_I\lp s^-\rp=v_I^-\lp s^-\rp=0,&
\end{align}
from where one concludes that the functions $s^{\pm}$ verify
\begin{equation}
\label{expforHs}
\ld\begin{array}{l}
s^-\circ\Phi^-=\lambda+\textup{const.}\\
s^+\circ\Phi^+=H+\textup{const.}
\end{array}\rb\textup{ on }\Sigma.
\end{equation}
The constants are again irrelevant and can be absorbed in
the coordinate $\lambda$ and in $H$ respectively, so we may set them to zero
without loss of generality.

Given $p^{\pm}\in\Omega^{\pm}$, the value $s^{\pm}\lp p^{\pm}\rp$ indicates at what height (as measured by
$\lambda$) the point $p^{\pm}$ is located along the null generator that contains it.
In view of (\ref{expforHs}), the function $H\lp\lambda,y^A\rp$ measures the step on the null coordinate when crossing from $\mathcal{M}^-$ to $\mathcal{M}^+$. We therefore call $H(\lambda,y^A)$ the \textbf{step function}.

This result immediately connects the cut-and-paste constructions with our formalism. In the seminal construction by Penrose \cite{dewitt1968battelle}, \cite{Penrose:1972xrn}, plane-fronted impulsive gravitational waves propagating in the Minkowski spacetime are constructed by cutting out Minkowski across a null hyperplane and reattaching the two regions after shifting the null coordinate of one of the regions. To be specific, using double null coordinates where
the Minkowski metric is $g_{\mbox{\tiny Mink}} = - 2 du dv + dx^2 + dy^2$
and the impulsive wave is located at $v=0$, the reattachment is performed
after shifting $u$ in $v=0^+$ 
by $u \rightarrow u + h(x,y)$. 
This jump is precisely of the form \eqref{expforHs} with $H = u + h(x,y)$, provided
we use $\{u = \lambda,x,y\}$ also as 
coordinates intrinsic to the null hyperplane, so that the embedding $\Phi^{-}$ becomes the identity.
Another example of the direct link between the function $H$ and the cut-and-paste construction appears in \cite{podolsky2017penrose}, where expression \eqref{expforHs} is equivalent to $H=\cv-\mathcal{H}\lp\eta,\bar{\eta}\rp$. More details about the connection between the matching formalism and the cut-and-paste construction are given in Section \ref{secPenrplane} below.

Let us pause for a moment and summarize what we have found. Assuming that
the causal orientations of the boundaries are compatible the
matching is possible if and only if
the junction conditions \eqref{junct1}-\eqref{junct4} are satisfied. The last two are always solvable and determine uniquely the coefficients $B$ and $C^I$ (i.e. the tangential components of the rigging $\xi^+$). Equation \eqref{junct2} is automatically satisfied if the embeddings $\Phi^{\pm}$ are restricted to satisfy
\eqref{expforHs} and $A$ and $a_I$ are defined by \eqref{HFA}, where
$H$  is a smooth function on $\Sigma$ satisfying $\partial_{\lambda} H \neq 0$.
Actually, with our choice that $k^{\pm}$ and $L^{\pm}$
are all future directed, the functions $\nfi^{\pm}$ are positive and so it is $f$. 
Consequently, the function $H$ must satisfy $\partial_{\lambda} H = f A >0$, or in more geometric and coordinate independent terms, that $H$ is strictly increasing
along any future null generator of $\Sigma$.

Thus, the core problem for existence of the matching is the solvability of
\eqref{junct1}.
Note that for any $p \in \Omega^-$, the constant section
$S^-_{s^-(p)} = \{ s^- = s^-(p)\} \subset \Omega^{-}$
is mapped via $\bs{\Phi}$ to the spacelike submanifold
$\bs{\Phi} (S^-_{s^-(p)} ) \subset \Omega^+$. Condition \eqref{junct1} combined with
\eqref{eqA11b} states that there exists an isometry between these two submanifolds. Even more, since $h^I$ depends only on $\{y^A\}$, this isometry must be universal in the sense of being independent of the value $s^-(p)$. This fact was already observed in \cite{blau2016horizon} (see equations (2.9)-(2.10)) and later in \cite{bhattacharjee2017soldering} when studying the coordinate changes leaving the first fundamental form $\gamma$ invariant.  
In order to describe this more explicitly, let us transfer the coordinates $\{ \lambda, y^I\}$
of $\Sigma$  to $\Omega^{-}$, so that the embedding
$\Phi^{-}$ becomes the identity map. Take now coordinates $\{ s^+, u^I \}$
on $\Omega^+$ such that
$v^+_I = \partial_{u^I}$ (in particular, they are constant along the null generators). The embedding $\Phi^+$  takes the form
\begin{align}
  \begin{array}{lcll}
    \Phi^+: & \Sigma & \longrightarrow &\Omega^+ \\
            & ( \lambda, y^I ) & \longrightarrow & \Phi^+(\lambda, y^I) =
                                   \left (s^+= H(\lambda,y^I), u^I = h^I(y^J) \right ).
  \end{array}
\label{embedPhi+}                                                   
\end{align}
       The section $S^-_{s^-_0}$ in $\Omega^-$  is mapped into $\bs{\Phi} (S^-_{s^-_0}) = \{ s^+ = H(\lambda =s^-_0, y^J), u^I(y^J) \}$.
Note that a point $p \in \Omega^-$ can be identified uniquely by specifying the
null generator to which it belongs together with its height
$s^-(p)$ along the generator, and the same happens on $\Omega^+$. Thus, 
the matching is feasible if and only if  there exists a diffeomorphism
$\Psi$  between the set of null generators of $\Omega^-$ and 
the set of null generators of $\Omega^+$ (defined locally by $u^I(y^J)$) 
such that, for each possible value of $s^-_0$,
 the map that takes each point at height $s^-_0$ along a generator
$\sigma$ in $\Omega^-$ to 
the point at height $H|_{\sigma} (s^-_0)$ in $\Omega^+$ along the  generator $\Psi(\sigma)$, happens to be an isometry.
This is of course a very strong restriction and generically it will not be
possible to find $H$ and $\Psi$ verifying it (which simply means that the matching cannot be done). However, as we see next, there are situations where the matching is not only feasible but it even allows for an infinite number of possibilities, and other cases where there is at most one possible  step function $H$
for each admissible choice of $\Psi$.

In order to describe these results, recall that
 $e^-_1=k^-$ and $e^+_1= (\partial_{\lambda} H) k^+$, so (\ref{eqA28}) immediately leads to
\begin{equation}
\label{eqA59}
\begin{array}{lcl}
e^-_1\lp h^-_{IJ}\rp=\dfrac{\cp h^-_{IJ}}{\cp\lambda}=2{\bs{\chi}}_-^{k^-}\lp v^-_I,v^-_J\rp, & &e^+_1\lp h^+_{IJ}\rp=\dfrac{\cp h^+_{IJ}}{\cp\lambda}=2fA{\bs{\chi}}_+^{k^+}\lp v^+_I,v^+_J\rp.
\end{array}
\end{equation}
The partial derivative of (\ref{junct1}) with respect to $\lambda$ gives after using (\ref{eprop2})
\begin{equation}
\label{eqA58}
\dfrac{\cp h^-_{IJ}}{\cp\lambda}=\dfrac{\cp \lp b_I^Ab_J^C\rp}{\cp\lambda}h^+_{AC}+b_I^Ab_J^C\dfrac{\cp h^+_{AC}}{\cp\lambda}=b_I^Ab_J^C\dfrac{\cp h^+_{AC}}{\cp\lambda}.
\end{equation}
Combining (\ref{eqA59}), (\ref{eqA58})  yields
\begin{equation}
\label{detrmfA}
{\bs{\chi}}_-^{k^-}\lp v^-_I,v^-_J\rp=\dfrac{\cp H\lp\lambda,y^K\rp}{\cp\lambda}b_I^Ab_J^C{\bs{\chi}}_+^{k^+}\lp v^+_A,v^+_C\rp.
\end{equation}
Since we are assuming the geometry of $\Omega^{\pm}$ to be known and that 
the choice 
of $\{ k^{\pm}, v^{\pm}_I\}$ has already been made,  this expression determines, for each possible choice of $\Psi$,
 i.e. of $b^{A}_{B}$ fulfilling \eqref{junct1},
a unique value for $\partial_{\lambda} H$ 
 unless the two second fundamental forms vanish simultaneously.  If, on the other hand, there exist open sets $\textup{ }\cu^{\pm}\subset\Omega^{\pm}$ related by  $\textup{ }\cu^+=\bs{\Phi}\lp\cu^-\rp$ and such that
\begin{equation}
\label{condicioncilla}
\begin{array}{lcl}
\bs{\chi}_-^{k^-}\lp v^-_I,v^-_J\rp\vert_{\cu^-}=0, & & {\bs{\chi}}_+^{k^+}\lp v^+_I,v^+_J\rp\vert_{\cu^+}=0,
\end{array}
\end{equation}
then \eqref{detrmfA} is identically satisfied. Under \eqref{condicioncilla},
all the spacelike sections in $\cu^{-}$ are isometric to each other, and the same happens in $\cu^{+}$ (this is a consequence of \eqref{eqA28} and the equality
\eqref{equalityhbarh}  between the quotient metric at any point $p$
and the metric of any spacelike section passing through this point). Thus, the set of null generators
can be endowed with a positive definite metric. If there is an isometry
$\Psi$ between these two spaces, then {\it any} step function
$H(\lambda,y^I)$ satisfying $\partial_{\lambda} H >0$ defines a feasible matching.
This means that a point $p \in \Omega^-$ lying on a null generator
$\sigma^-$ can be shifted arbitrarily along the null generator 
$\sigma^+ := \Psi(\sigma^{-})$ in  $\Omega^+$, with the only condition that 
if $q$ is to the future of $p$ along $\sigma^-$ then
their images have the same causal relation along 
$\sigma^+$.  The matching in these circumstances exhibits a large freedom. The results from the previous reasoning completely agree with those obtained in \cite{blau2016horizon} when studying this particular case of totally geodesic null boundaries and its associated matching freedom. Two examples of this  
are the following cut-and-paste constructions: the plane-fronted impulsive wave \cite{penrose1968twistor}, \cite{dewitt1968battelle}, \cite{Penrose:1972xrn} by
Penrose and both the non-expanding impulsive wave in constant-curvature backgrounds \cite{podolsky1999nonexpanding}, \cite{podolsky2019cut} and the impulsive wave with gyratons \cite{podolsky2017penrose} by Podolsk{\`y} and collaborators.

\section{Energy-momentum tensor of the shell}
\label{secegmom}

Let us assume that the manifolds $\lp\Mpm,g^{\pm}\rp$ are such that the conditions (\ref{sjc1})-(\ref{sjc3}) are fulfilled, so a spacetime $( \mathcal{M},g)$ containing a null shell can be constructed. Our aim in this section
is to study the energy-momentum tensor of this shell. This tensor encodes
fundamental properties of the matter-energy contents within the shell. For the computation we shall use the framework developed in \cite{mars2020hypersurface}, to which we refer for additional details.

Given any metric hypersurface data, the associated tensor $\mathcal{A}$
introduced in 
(\ref{ambientmetric}) is by definition non-degenerate and hence admits an inverse contravariant tensor ${\mathcal{A}}^{\sharp}\vert_p$, from which one can  define a symmetric $2-$contravariant tensor $P^{ab}\vert_p$, a vector field $n^a\vert_p$ and a scalar $n^{(2)}$ in $p\in\Sigma$ by means of
\begin{equation}
\label{Pnn2}
\begin{array}{lclcl}
{\mathcal{A}}^{\sharp}\vert_p\lp\lp\bs{\alpha},a\rp,\lp\bs{\beta},b\rp\rp=P\vert_p\lp\bs{\alpha},\bs{\beta}\rp+an\vert_p\lp\bs{\beta}\rp+bn\vert_p\lp\bs{\alpha}\rp+abn^{(2)}\vert_p, & & \bs{\alpha},\bs{\beta}\in T^*_p\Sigma& &a,b\in\mathbb{R}.
\end{array}
\end{equation}
As given in  Definition 10 of \cite{mars2013constraint}, the energy-momentum tensor of the shell is the symmetric $2-$contravariant tensor
\begin{equation}
\label{tau}
\tau^{ab}=\lp n^aP^{bc}+n^bP^{ac}\rp n^dV_{dc}-\lp n^{\lp2\rp}P^{ac}P^{bd}+P^{ab}n^cn^d\rp V_{cd}+\lp n^{\lp2\rp}P^{ab}-n^an^b\rp P^{cd}V_{cd},
\end{equation}
where $V_{ab}$ are the components of the tensor $\bs{V}$ introduced in  (\ref{eqA22}). In matrix notation, the tensors ${\mathcal{A}}$ and ${\mathcal{A}}^{\sharp}$ can be expressed as follows:
\begin{equation}
{\mathcal{A}}=\lp\begin{array}{cc}
\gamma_{ij} & \ell_i\\
\ell_j & \ell^{\lp2\rp}
\end{array}\rp,\qquad{\mathcal{A}}^{\sharp}=\lp\begin{array}{cc}
P^{ij} & n^i\\
n^j & n^{\lp2\rp}
\end{array}\rp.
\end{equation}
Particularizing to the metric hypersurface data obtained from (\ref{sjc1})-(\ref{sjc3}), we have $\ell^{\lp2\rp}=0$ and $\ell_1=-\nfi^-$, which gives 
\begin{equation}
\label{AyAparticular}
{\mathcal{A}}=\lp\begin{array}{ccccc}
0 & 0 & \ell_1\\
0 & \gamma_{IJ} & \ell_I\\
\ell_1 & \ell_J & \ell^{\lp2\rp}
\end{array}\rp\qquad\Longrightarrow\qquad{\mathcal{A}}^{\sharp}=\lp\begin{array}{ccccc}
P^{11} & P^{1J} & n^1\\
P^{I1} & \gamma^{IJ} & 0\\
n^1 & 0 & 0
\end{array}\rp,
\end{equation}
where $n^1=1/\ell_1=-1/\nfi^-$ and $\gamma^{IJ}$ is the inverse of
$\gamma_{IJ}=h_{IJ}^-$. Although the known tensor is actually $h_{IJ}^-$, in the following we shall use $\bs{\gamma}$ instead, so that expressions become clearer. 

Substitution of $n^a=n^1\delta_1^a$ and $n^{\lp2\rp}=0$ in (\ref{tau}) simplifies the form of the energy-momentum tensor in the present setup to be
\begin{equation}
\tau^{ab}=\lp n^1\rp^2\Big( \lp \delta^a_1P^{bc}+\delta^b_1P^{ac}\rp V_{1c}- P^{ab}V_{11} -\delta^a_1\delta^b_1 P^{cd}V_{cd}\Big),
\end{equation} 
or, in components,
\begin{equation}
\label{componentstau}
\tau^{11}=-\lp n^1\rp^2\gamma^{IJ}V_{IJ},\qquad
\tau^{1I}=\lp n^1\rp^2\gamma^{IJ}V_{1J},\qquad
\tau^{IJ}=-\lp n^1\rp^2\gamma^{IJ}V_{11}.
\end{equation}
The next step is to compute the explicit form
of the tensor $\bs{V} = \bs{Y}^+ - \bs{Y}^-$. Since the rigging has been adapted to the $\Omega^-$ side, the computation of
$\bs{Y}^+$ is considerably more involved than that of $\bs{Y}^-$. We start with a few useful lemmas that will aid us along the way.  We assume without
further notice the setup of sections 3 and 4. The Lie and exterior derivatives on $\Sigma$ are denoted by $\lieo$ and $\tdo$ and we unify notation by writing $\{\theta^1=\lambda,\theta^A=y^A\}$.
\begin{lem}
\label{lem3}
The one$-$form $\bs{L}^+:=\bs{g}^+\lp\cdot,L^+\rp$ satisfies
\begin{equation}
\label{lastlem}{\Phi^+}^*\lp \bs{L^+}\rp=-\uhat{\bs{\omega}}\qquad\textup{where}\qquad \uhat{\bs{\omega}}:=\nfi^+\tdo H+\psi_J^+\tdo h^J\in \Gamma \lp T^*\Sigma\rp.
\end{equation}
\end{lem}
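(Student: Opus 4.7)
The plan is to compute the pullback $\Phi^{+*}(\bs{L}^+)$ directly by evaluating it on the coordinate basis $\{\cp_\lambda, \cp_{y^I}\}$ of $T\Sigma$ and then comparing with the analogous evaluation of $\hat{\bs{\omega}}$. Since a one-form on $\Sigma$ is determined by its action on this basis, this will give the identity.

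First, I would use the definition of the pullback: for any coordinate vector $\cp_{\theta^i}$,
\begin{equation*}
\lp\Phi^{+*}\bs{L}^+\rp\lp\cp_{\theta^i}\rp=\bs{L}^+\lp\td\Phi^+\lp\cp_{\theta^i}\rp\rp=\la e^+_i,L^+\rag.
\end{equation*}
Then I would insert the decomposition of $e^+_i$ in the basis $\{k^+,v^+_I\}$ provided by (\ref{eqA4}) together with the simplified form $e^+_1=fAk^+$ (equation (\ref{junct2})) and the identifications $a_I=\cp_{y^I}H$, $b_I^J=\cp_{y^I}h^J$ coming from (\ref{eqA11a})--(\ref{eqA11b}). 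Using the definitions (\ref{eqA30}) of $\nfi^+$ and $\psi^+_I$ on the $\mathcal{M}^+$ side, I obtain
\begin{align*}
\la e^+_1,L^+\ragp&=fA\la k^+,L^+\ragp=-fA\,\nfi^+=-\nfi^+\cp_\lambda H,\\
\la e^+_I,L^+\ragp&=a_I\la k^+,L^+\ragp+b_I^J\la v^+_J,L^+\ragp=-\nfi^+\cp_{y^I}H-\psi^+_J\cp_{y^I}h^J.
\end{align*}

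Second, I would evaluate $\hat{\bs{\omega}}=\nfi^+\tdo H+\psi^+_J\tdo h^J$ on the same basis. Recalling from (\ref{setsolutions}) that $h^J=h^J\lp y^A\rp$ (so $\cp_\lambda h^J=0$), one finds
\begin{equation*}
\hat{\bs{\omega}}\lp\cp_\lambda\rp=\nfi^+\cp_\lambda H,\qquad \hat{\bs{\omega}}\lp\cp_{y^I}\rp=\nfi^+\cp_{y^I}H+\psi^+_J\cp_{y^I}h^J.
\end{equation*}
Comparing the two computations gives $\lp\Phi^{+*}\bs{L}^+\rp\lp\cp_{\theta^i}\rp=-\hat{\bs{\omega}}\lp\cp_{\theta^i}\rp$ for $i=1$ and $i=I$, which establishes (\ref{lastlem}).

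There is really no obstacle: the argument is a direct verification that only uses the explicit form of the embedding $\Phi^+$ through (\ref{eqA4}), the characterization of $a_I,b_I^J,A$ in terms of the step function $H$ and the transverse identification $h^J$, and the constancy of $h^J$ along $\lambda$. The only subtle point worth emphasising is that the vanishing of $\cp_\lambda h^J$ is what makes $\hat{\bs{\omega}}\lp\cp_\lambda\rp$ reduce to $\nfi^+\cp_\lambda H$, matching the computation of $\la e^+_1,L^+\ragp$.
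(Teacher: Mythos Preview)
Your proof is correct and follows essentially the same approach as the paper: both compute the pullback directly by pairing $L^+$ with the decomposition of $e^+_i$ in the basis $\{k^+,v_J^+\}$ and then identifying the result with $-\uhat{\bs{\omega}}$. The only cosmetic difference is that the paper carries out the computation for a general vector $Z=Z^a\cp_{\theta^a}$ in one go, whereas you evaluate separately on $\cp_\lambda$ and $\cp_{y^I}$; the content is identical.
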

\begin{proof}
For any vector field $Z\in\Gamma\lp T\Sigma\rp$, it holds
\begin{align}
\nonumber {\Phi^+}^*\lp \bs{L}^+\rp\lp Z^a\cp_{\theta^a}\rp&=\bs{L}^+\lp {\Phi^+}_*\lp Z^a\cp_{\theta^a}\rp\rp=\la L^+,Z^ae^+_a\ragp\\
\nonumber &=Z^1\la L^+,e_1^+\ragp+Z^A\la L^+,e_A^+\ragp\\
\nonumber &=Z^1\la L^+,fAk^+\ragp+Z^A\la L^+,a_Ak^++b_A^Jv_J^+\ragp\\
\nonumber &=-\nfi^+\lp fA Z^1+Z^A a_A\rp-Z^Ab_A^J\psi_J^+\\
\label{eqA60} &=-\lp\nfi^+\tdo H\lp Z\rp+\psi_J^+\tdo h^{J}\lp Z\rp\rp=-\uhat{\bs{\omega}}\lp Z\rp,
\end{align}
where we have used (\ref{junct2}), (\ref{eqA4}) in the third equality, (\ref{eqA30}) in the fourth one and (\ref{eqA11b}), (\ref{HFA}) for the last step.
\end{proof}
The rigging vector field $\xi^+$ has been decomposed in (\ref{eqA4}) into
a tangential part (the $k^+$ and $v^+_K$ components) and a transversal part (the $L^+$ component). It is convenient to introduce the
vector field $X\in\Gamma\lp T\Sigma\rp$ satisfying
\begin{align}
  \label{eqfortdX}
  \xi^+=\dfrac{1}{A}L^++Bk^++C^Kv^+_K=\dfrac{1}{A}\lp L^++\Phi^+_*\lp X\rp\rp,
 \end{align}
and to define the  functions $X^a$ on $\Omega^+$ by the decomposition
$\Phi^+_*\lp X\rp=X^1e^+_1+X^Ae^+_A$.
\begin{lem}
  \label{lem4}
  The functions $X^a$  are given by
\begin{align}
\label{XA}X^A&=\gamma^{IA}\lp \omega_I-A\psi^-_I\rp,\\
\label{X1} X^1&=- \dfrac{\gamma^{IJ}}{2\nfi^-A}\lp\omega_I-A\psi^-_I\rp\lp \omega_J+A\psi^-_J\rp.
\end{align}
Moreover, the vector field $X=X^a\cp_{\theta^a}$ satisfies
\begin{equation}
\label{gammaX}\gamma\lp X,\cdot\rp=\uhat{\bs{\omega}}-A\uhat{\bs{\psi}}^--\nfi^+\cp_{\lambda}H\tdo\lambda\qquad\textup{where}\qquad\uhat{\bs{\psi}}^-:=\psi^-_I\tdo\theta^I.
\end{equation}
\end{lem}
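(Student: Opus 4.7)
The plan is to unpack the coefficients $X^a$ by comparing the two decompositions of $\Phi^+_*(X)$. From \eqref{eqfortdX} one has $\Phi^+_*(X)=AB\,k^+ + AC^K v^+_K$, while substituting \eqref{eqA4} into $\Phi^+_*(X)=X^1 e^+_1 + X^A e^+_A$ gives $\Phi^+_*(X)=(X^1 fA + X^A a_A)k^+ + X^A b_A^J v^+_J$. To extract $X^A$, I would pair $\Phi^+_*(X)$ against $v^+_M$ with respect to $g^+$ in both forms, obtaining $X^A b_A^J h^+_{JM}=AC^K h^+_{KM}$. Contracting this with $b_I^M$ and invoking \eqref{junct1} to fold $b_A^J b_I^M h^+_{JM}$ into $\gamma_{AI}$ turns it into $X^A \gamma_{AI}= AC^K b_I^M h^+_{MK}$. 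A direct rearrangement of \eqref{junct3} identifies the right-hand side as $\omega_I - A\psi^-_I$, where the components $\omega_I := \uhat{\bs{\omega}}(\cp_{y^I}) = \nfi^+ a_I + \psi^+_J b_I^J$ are read off from Lemma \ref{lem3}; raising $I$ with $\gamma^{IA}$ then yields \eqref{XA}.

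For $X^1$ the essential input is the null character of $\xi^+$, i.e.\ \eqref{sjc3} with $\ell^{(2)}=0$. Plugging $\xi^+=A^{-1}(L^++\Phi^+_*(X))$ into $\la\xi^+,\xi^+\ragp=0$ produces three pieces. The term $\la L^+,L^+\ragp$ vanishes by nullity of $L^+$; the cross term $\la L^+,\Phi^+_*(X)\ragp$, computed from \eqref{eqA30} and \eqref{eqA4}, simplifies to $-\nfi^- A X^1 - X^A \omega_A$ once one uses the identity $\nfi^+ fA = \nfi^- A$ coming from $f:=\nfi^-/\nfi^+$ and \eqref{junct2}; and the self-term $\la \Phi^+_*(X),\Phi^+_*(X)\ragp$ collapses via \eqref{junct1} to $\gamma_{AC}X^A X^C$, since the $k^+$-parts drop out against themselves. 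Substituting the formula for $X^A$ already obtained, so that $\gamma_{AC}X^A X^C = X^C(\omega_C - A\psi^-_C)$, and solving the resulting linear equation $2\nfi^- A X^1 = -X^A(\omega_A + A\psi^-_A)$ for $X^1$ gives exactly \eqref{X1}.

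Finally, identity \eqref{gammaX} is proved by evaluating both sides on the coordinate basis $\{\cp_\lambda,\cp_{y^I}\}$. Because $e^-_1=k^-$ is null and orthogonal to all $v^-_J$, the block $\gamma_{1a}$ vanishes, hence $\gamma(X,\cp_\lambda)=0$; on the right-hand side, $\uhat{\bs{\omega}}(\cp_\lambda)=\nfi^+\cp_\lambda H$ (using $\cp_\lambda h^J=0$ from \eqref{eqA11b}) cancels exactly the subtracted $\nfi^+\cp_\lambda H\,\tdo\lambda$-contribution, while $\uhat{\bs{\psi}}^-(\cp_\lambda)=0$, so the right-hand side also vanishes. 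On $\cp_{y^I}$ the left-hand side is $\gamma_{IA}X^A=\omega_I - A\psi^-_I$ by the computation above, matching the right-hand side term by term. The main difficulty is purely organizational: keeping track of the relation $\zeta=fA$ from \eqref{junct2}, the degeneracy $\gamma_{1a}=0$, and the $\lambda$-independence of $h^J$; once these are in hand, \eqref{XA}, \eqref{X1} and \eqref{gammaX} emerge as direct algebraic consequences of \eqref{junct1}, \eqref{junct3}, \eqref{sjc3} and the defining relation \eqref{eqfortdX}.
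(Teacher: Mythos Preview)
Your proof is correct and follows essentially the same approach as the paper's: both derive \eqref{XA} from the junction condition \eqref{sjc2}/\eqref{junct3} and \eqref{X1} from the null condition \eqref{sjc3}. The paper is slightly more streamlined in that it obtains $X^b\gamma_{ab}=\omega_a+A\ell_a$ in one stroke from $\la\xi^+,e^+_a\ragp=\ell_a$ and Lemma~\ref{lem3}, which yields \eqref{XA} and \eqref{gammaX} simultaneously, whereas you route through the explicit $B,C^K$ coefficients and verify \eqref{gammaX} separately at the end; the content is the same.
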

\begin{proof}
The shell junction conditions (\ref{sjc2}) ensure that
$\la\xi^+,e^+_a\ragp=\frac{1}{A}\lp\la L^+,e_a^+\ragp+X^b\gamma_{ab}\rp=\ell_a$ with $\ell_1=-\nfi^-$, $\ell_A=-\psi^-_A$. Using Lemma \ref{lem3} it follows
\begin{equation}
\la L^+,e^+_a\ragp=\bs{L}^+\lp e^+_a\rp=\bs{L}^+\lp \Phi^+_*\lp\cp_{\theta^a}\rp\rp={\Phi^+}^*\lp\bs{L}^+\rp\lp \cp_{\theta^a}\rp=-\uhat{\bs{\omega}}\lp\cp_{\theta^a}\rp=-\omega_a.
\end{equation}
Consequently,
\begin{equation}
\label{Xeq1}
X^b\gamma_{ab}=\omega_a+A\ell_a,
\end{equation}
which proves both (\ref{gammaX}) and (\ref{XA}) after using
\begin{align}
  \nfi^-A = \nfi^+fA= \nfi^+\cp_{\lambda}H. \label{AderH}
\end{align}
Condition (\ref{sjc3}) and the fact that both $\xi^- = L^{-}$
and $L^+$ are null give
\begin{equation}
\label{Xeq2}
0=\la\xi^+,\xi^+\ragp=\dfrac{1}{A^2}\lp 2X^a\la L^+,e^+_a\ragp+X^aX^b\gamma_{ab}\rp\quad\Longrightarrow\quad -2X^a\omega_a+X^aX^b\gamma_{ab}=0.
\end{equation}
Combining this  with (\ref{Xeq1}) and using $X^a\ell_a=-X^1\nfi^--X^A\psi^-_A$ yields
\begin{equation}
\label{Xeq3}-X^a\omega_a=A\lp X^1\nfi^-+X^A\psi^-_A\rp\qquad\Longrightarrow\qquad 2X^1 \nfi^+\cp_{\lambda}H=-X^A\lp\omega_A+A\psi^-_A\rp,
\end{equation}
which gives (\ref{X1}) after using again \eqref{AderH}.
\end{proof}

\begin{cor}
\label{cor1}
In the basis $\{L^+,k^+,v_I^+\}$ of $\Gamma\big( T\Mp\big)\vert_{\Omega^+}$, the rigging $\xi^+$ can be expressed as
\begin{equation}
\label{uniqxi}
\xi^+=\dfrac{\nfi^-}{\cp_{\lambda}H}\lp \frac{1}{\nfi^+}
  L^++  h_+^{AB}\lp (b^{-1})^I_A \lp \cp_{y^I}H
  - \frac{1}{\nfi^-} \cp_{\lambda} H \psi^-_I \rp + \frac{1}{\nfi^+}
  \psi^+_A   \rp Z_B\rp,
\end{equation}
where $Z_B:=  \frac{1}{2} \lp (b^{-1})_B^J \lp
  \cp_{y^J}H -  \frac{1}{\nfi^{-}} \cp_{\lambda}H \psi_J^- \rp -\frac{1}{\nfi^+}
  \psi^+_B   \rp k^++v^+_B$.
\end{cor}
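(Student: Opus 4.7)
The proof is a direct computation based on the tangential/transversal split (\ref{eqfortdX}) of $\xi^+$ and the explicit formulae for $X^a$ provided by Lemma \ref{lem4}. Expanding $\Phi^+_*(X) = X^1 e^+_1 + X^A e^+_A$ using (\ref{eqA4}) and $e^+_1 = fAk^+$ from (\ref{junct2}) gives
\begin{equation*}
\xi^+ = \frac{1}{A}L^+ + \frac{1}{A}\big(X^1 fA + X^A a_A\big)k^+ + \frac{X^A b^J_A}{A}v^+_J.
\end{equation*}
The overall prefactor $\nfi^-/\cp_\lambda H$ announced in (\ref{uniqxi}) emerges from the identity $1/A = \nfi^-/(\nfi^+\cp_\lambda H)$, equivalent to $\cp_\lambda H = fA$ in (\ref{HFA}).

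The $L^+$ coefficient is then immediate. For the coefficient of $v^+_B$, the plan is to invert (\ref{junct1}) to write $\gamma^{IJ} = (b^{-1})^I_L(b^{-1})^J_K h_+^{LK}$, substitute (\ref{XA}) into $C^J := X^A b^J_A/A$, and expand $\omega_I = \nfi^+\cp_{y^I}H + \psi^+_J b^J_I$ using Lemma \ref{lem3}. After contracting with $b^J_A$ to cancel one inverse and factoring $\nfi^-/\cp_\lambda H$, the coefficient reduces exactly to the bracket multiplying $v^+_B$ in (\ref{uniqxi}).

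The main subtlety lies in the $k^+$ coefficient $B = X(H)/A$. The cleanest route is to use the null-rigging relation (\ref{Xeq3}) together with the identity $X(h^J) = X^A b^J_A = AC^J$ to obtain
\begin{equation*}
\nfi^+ B = -\big(X^1 \nfi^- + X^A \psi^-_A\big) - \psi^+_J C^J.
\end{equation*}
Inserting (\ref{X1}) and (\ref{XA}), and writing $\omega_J + A\psi^-_J = (\omega_J - A\psi^-_J) + 2A\psi^-_J$, yields the perfect-square identity
\begin{equation*}
-X^1\nfi^- - X^A\psi^-_A = \frac{h_-^{IJ}}{2A}(\omega_I - A\psi^-_I)(\omega_J - A\psi^-_J),
\end{equation*}
in which exactly the combinations $(b^{-1})^A_L(\omega_A - A\psi^-_A)$ that featured in $C^J$ reappear. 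A last rearrangement, in which the $-\psi^+_J C^J$ contribution flips the sign of $\psi^+$ in one of the two brackets, reassembles the result into $\tfrac{1}{2}\, h_+^{AB}[\cdot]_A[\cdots]_B$, which is precisely the $k^+$ content of $h_+^{AB}[\cdot]_A Z_B$ in (\ref{uniqxi}).

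The main obstacle is thus recognizing the perfect-square structure hidden in the $k^+$ coefficient: once spotted, the calculation collapses into the stated expression. The $L^+$ and $v^+_B$ coefficients, by contrast, follow by straightforward substitution and the linear algebra of inverting (\ref{junct1}).
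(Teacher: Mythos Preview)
Your proposal is correct and follows essentially the same route as the paper's proof: both derive the inverse relation $\gamma^{IJ}=(b^{-1})^I_L(b^{-1})^J_K h_+^{LK}$ from \eqref{junct1}, substitute \eqref{XA}--\eqref{X1} into the decomposition \eqref{eqfortdX}, and use $A=\nfi^+\cp_{\lambda}H/\nfi^-$ together with the definition of $\uhat{\bs{\omega}}$. Your treatment of the $k^+$ coefficient via \eqref{Xeq3} and the perfect-square identity is a slightly more explicit unpacking of what the paper compresses into ``inserting \eqref{XA}--\eqref{X1}'', but the underlying computation is the same.
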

\begin{proof}
From the first junction condition \eqref{junct1} one deduces
\begin{align}
\nonumber &\delta^{A}_C=h_-^{AB}h_{BC}^-\stackbin{\eqref{junct1}}{=}h_-^{AB}b_B^Ib_C^Jh_{IJ}^+\quad\Longrightarrow\quad (b^{-1})^A_K=h_-^{AB}b_B^Ih_{IK}^+\\
\label{invsh+}&\quad\Longrightarrow\quad  h_+^{KL} (b^{-1})_{K}^A=h_-^{AB}b_B^L\quad\Longrightarrow\quad h_+^{KL}(b^{-1})_{K}^A(b^{-1})_L^J=h_-^{AJ}\equiv\gamma^{AJ}.
\end{align}	
The shell junction condition \eqref{junct2} together with \eqref{HFA} give
  \begin{align*}
    e^+_1=\cp_{\lambda}Hk^+, \qquad e^+_I=\cp_{y^I}Hk^++b_I^Jv^+_J.
    \end{align*}
  The result \eqref{uniqxi} follows from \eqref{eqfortdX} after inserting
  \eqref{XA}-\eqref{X1} and using  the definition
  \eqref{lastlem} of $\uhat{\bs{\omega}}$ and \eqref{invsh+}.
\end{proof}
For the sake of simplicity, we introduce the notation
\begin{align}
  \label{hatquantities}
\hat{\psi}_I^+:=b_I^K\psi^+_K=\psi^+_K\cp_{y^I}h^K, \qquad
\hat{v}^+_I:=b_I^Kv_K^+,
\end{align}
and similarly for other objects carrying capital Latin indices.
\begin{lem}
\label{lem5}
The following identities hold:
\begin{align}
\label{derA}\dfrac{\cp_{\theta^a}A}{A}=&\textup{ }\dfrac{\cp_{\theta^a}\cp_{\lambda}H}{\cp_{\lambda}H}+ \dfrac{\cp_{\theta^a}\nfi^+}{\nfi^+}-\dfrac{\cp_{\theta^a}\nfi^-}{\nfi^-},\\
\label{porsi1} \la\nabla^+_{e^+_1}L^+,e_1^+\ragp=&\textup{ }\cp_{\lambda}H\lp \nfi^+\kappa_{k^+}^+\cp_{\lambda}H-\cp_{\lambda}\nfi^+ \rp,\\
\nonumber \la\nabla^+_{e^+_1}L^+,e_J^+\ragp+\la\nabla^+_{e^+_J}L^+,e_1^+\ragp=&\textup{ }\cp_{\lambda}H\bigg( 2\nfi^+\kappa_{k^+}^+\cp_{y^J}H-\dfrac{\cp_{\lambda}\hat{\psi}^+_J}{\cp_{\lambda}H} -2\nfi^+\bs{\sigma}_{L^+}^+\lp \hat{v}_J^+\rp-\cp_{y^J}\nfi^+ \bigg)\\
\label{porsi2} &\textup{ }-\cp_{y^J}H\textup{ }\cp_{\lambda}\nfi^+,\\
\nonumber \la\nabla^+_{e^+_I}L^+,e^+_J\ragp=&\textup{ }\nfi^+\kappa_{k^+}^+\cp_{y^I}H\textup{ }\cp_{y^J}H-\dfrac{\cp_{y^I}H\textup{ }\cp_{\lambda}\hat{\psi}^+_J}{\cp_{\lambda}H}-\cp_{y^J}H\textup{ }\cp_{y^I}\nfi^+\\
\label{porsi3} &\textup{ }-\nfi^+\lp\cp_{y^I}H\textup{ }\bs{\sigma}_{L^+}^+\lp\hat{v}^+_J\rp+\cp_{y^J}H\textup{ }\bs{\sigma}_{L^+}^+\lp \hat{v}_I^+\rp\rp +\bs{\Theta}^{L^+}_+\lp\hat{v}_I^+,\hat{v}_J^+\rp.
\end{align}
\end{lem}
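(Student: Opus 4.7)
The proof of Lemma 5 is a direct computation combining Lemma 1 with the identifications $e^+_1 = \cp_\lambda H\, k^+$ and $e^+_J = \cp_{y^J}H\, k^+ + b^I_J v^+_I$ derived from \eqref{eqA4} and \eqref{HFA} in Section 4.

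Identity \eqref{derA} is the simplest: I would take the logarithmic $\cp_{\theta^a}$-derivative of the relation $\cp_\lambda H = fA = (\nfi^-/\nfi^+) A$ obtained from the definition of $f$ and from \eqref{HFA}, and then rearrange.

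For \eqref{porsi1}--\eqref{porsi3}, my plan is to first compute, once and for all, the four building blocks
\begin{align*}
  \la \nabla^+_{k^+} L^+, k^+\ragp,\quad
  \la \nabla^+_{k^+} L^+, v^+_J\ragp,\quad
  \la \nabla^+_{v^+_I} L^+, k^+\ragp,\quad
  \la \nabla^+_{v^+_I} L^+, v^+_J\ragp
\end{align*}
by substituting \eqref{eqA27c} and \eqref{derkL} and contracting with $k^+$ or $v^+_J$, using only $\la L^+, k^+\ragp = -\nfi^+$, $\la L^+, v^+_J\ragp = -\psi^+_J$, $\la v^+_I, v^+_J\ragp = h^+_{IJ}$ and the fact that $k^+$ is orthogonal to $T\Omega^+$. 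The key simplification is that in $\la \nabla^+_{k^+}L^+, v^+_J\ragp$ the two $(k^+(\nfi^+)/\nfi^+ - \kappa^+_{k^+})\psi^+_J$ contributions arising from the $L^+$ and $v^+_I$ parts of \eqref{derkL} cancel, and similarly in $\la \nabla^+_{v^+_I}L^+, v^+_J\ragp$ the two $\eta^+_I\psi^+_J$ contributions in \eqref{eqA27c} cancel. The four building blocks then reduce to
\begin{align*}
  -k^+(\nfi^+)+\kappa^+_{k^+}\nfi^+,\quad
  -k^+(\psi^+_J)-\nfi^+\bs{\sigma}^+_{L^+}(v^+_J),\quad
  -v^+_I(\nfi^+)-\nfi^+\bs{\sigma}^+_{L^+}(v^+_I),\quad
  \bs{\Theta}^{L^+}_+(v^+_I,v^+_J).
\end{align*}

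With these in hand, each of \eqref{porsi1}--\eqref{porsi3} follows from the bilinear expansion of $\la \nabla^+_{e^+_i}L^+, e^+_j\ragp$ in the decomposition of $e^+_i, e^+_j$. The final step is to convert the intrinsic $\Omega^+$-derivatives $k^+(\nfi^+)$, $v^+_I(\nfi^+)$ and $k^+(\psi^+_I)$ into partial derivatives in $\{\lambda, y^A\}$ via the chain rule $\cp_\lambda \nfi^+ = \cp_\lambda H\, k^+(\nfi^+)$, $\cp_{y^J}\nfi^+ = \cp_{y^J}H\, k^+(\nfi^+) + b^I_J v^+_I(\nfi^+)$, and to recognize using \eqref{eqA11b} and \eqref{hatquantities} that $\cp_\lambda \hat\psi^+_J = b^I_J\, \cp_\lambda H\, k^+(\psi^+_I)$ because $b^I_J$ depends only on $y^A$. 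The only real obstacle is bookkeeping: in \eqref{porsi2} and \eqref{porsi3} several terms of the form $\cp_{y^I}H\, \cp_{y^J}H\, \cp_\lambda\nfi^+/\cp_\lambda H$ appear with opposite signs and must be verified to cancel, after which the expressions collapse to exactly the stated form.
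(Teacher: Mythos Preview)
Your proposal is correct and follows essentially the same approach as the paper. The paper also relies on the decompositions $e^+_1=\cp_\lambda H\,k^+$, $e^+_J=\cp_{y^J}H\,k^+ + \hat v^+_J$ and on Lemma~\ref{lem1}, and it performs exactly the same chain-rule conversions (e.g.\ writing $\hat v^+_J(\nfi^+)=\cp_{y^J}\nfi^+ - \tfrac{\cp_{y^J}H}{\cp_\lambda H}\cp_\lambda\nfi^+$ and using $k^+(b^K_J)=0$ to get $\cp_\lambda\hat\psi^+_J$); the only organisational difference is that the paper expands each identity directly rather than first isolating your four building blocks, but the underlying computation and the cancellations you flag are identical.
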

\begin{proof}
   We shall use repeatedly the decompositions
  $e_1^+=\cp_{\lambda}H\textup{ }k^+$, $e^+_I=\cp_{y^I}H\textup{ }k^++\hat{v}_I^+$
  which follow directly from  (\ref{eqA4})-(\ref{junct2}) and (\ref{HFA}).
  For the first claim in the lemma, we compute 
 \begin{align}
\nonumber \dfrac{\cp_{\theta^a}A}{A}=\dfrac{1}{A}\cp_{\theta^a}\lp\dfrac{fA}{f}\rp=\dfrac{1}{fA}\lp\cp_{\theta^a}\cp_{\lambda}H-A\cp_{\theta^a}f\rp= \dfrac{\cp_{\theta^a}\cp_{\lambda}H}{\cp_{\lambda}H}-\dfrac{\cp_{\theta^a}f}{f},
  \end{align}
  which leads to  (\ref{derA}) by simply inserting $f:=\nfi^-/\nfi^+$. For the second expression, we use (\ref{derkL}) so that
\begin{equation}
\la\nabla^+_{e^+_1}L^+,e_1^+\ragp=\lp\cp_{\lambda}H\rp^2\la\nabla^+_{k^+}L^+,k^+\ragp=\lp\cp_{\lambda}H\rp^2 \lp \nfi^+\kappa^+_{k^+}-k^+\lp\nfi^+\rp \rp
\end{equation}
which can be rewritten as (\ref{porsi1}). For the third expression we compute each term in the left-hand side separately. In both cases
we use the covariant derivatives of $L$ given in Lemma \ref{lem1}. Firstly,
\begin{align}
\nonumber \la\nabla^+_{e^+_1}L^+,e_J^+\ragp&=\cp_{\lambda}H \la\nabla^+_{k^+}L^+,\cp_{y^J}H\textup{ }k^+ + \hat{v}_J^+\ragp\\
\nonumber &=\cp_{y^J}H\lp \nfi^+\kappa_{k^+}^+\cp_{\lambda}H-\cp_{\lambda}\nfi^+ \rp-\cp_{\lambda}H\lp k^+\big( \hat{\psi}_J^+\big)+\nfi^+\bs{\sigma}_{L^{+}}^+\lp \hat{v}_J^+\rp\rp\\
\label{l4ref1} &=\cp_{y^J}H\lp \nfi^+\kappa_{k^+}^+\cp_{\lambda}H-\cp_{\lambda}\nfi^+ \rp- \cp_{\lambda}\hat{\psi}_J^+-\nfi^+\cp_{\lambda}H\textup{ }\bs{\sigma}_{L^{+}}^+\lp \hat{v}_J^+\rp,
\end{align}
where in the second equality we used $k^+\lp b_I^J\rp=0$. Secondly,
\begin{align}
\nonumber \la\nabla^+_{e^+_J}L^+,e_1^+\ragp&=\cp_{\lambda}H \lp \cp_{y^J}H\la\nabla^+_{k^+}L^+,k^+\ragp+\la\nabla^+_{\hat{v}^+_J}L^+,k^+\ragp\rp\\
\nonumber &=\cp_{y^J}H\lp \nfi^+\kappa_{k^+}^+\cp_{\lambda}H-\cp_{\lambda}\nfi^+ \rp-\cp_{\lambda}H\lp \hat{v}_J^+\lp \nfi^+\rp+\nfi^+\bs{\sigma}^+_{L^+}\lp\hat{v}_J^+\rp\rp\\
\nonumber &=\cp_{y^J}H\lp \nfi^+\kappa_{k^+}^+\cp_{\lambda}H-\cp_{\lambda}\nfi^+ \rp-\cp_{\lambda}H\lp \lp e_J^+-\dfrac{\cp_{y^J}H}{\cp_{\lambda}H} e^+_1\rp \lp \nfi^+\rp+\nfi^+\bs{\sigma}^+_{L^+}\lp\hat{v}_J^+\rp\rp\\
\label{l4ref2} &=\cp_{\lambda}H\Big( \nfi^+\kappa_{k^+}^+\cp_{y^J}H- \cp_{y^J}\nfi^+ -\nfi^+\bs{\sigma}^+_{L^+}\lp\hat{v}_J^+\rp\Big),
\end{align}
which immediately leads to (\ref{porsi2}). Finally, for the term $\la\nabla^+_{e^+_I}L^+,e^+_J\ragp$ one obtains
\begin{align}
\nonumber\la\nabla^+_{e^+_I}L^+,e^+_J\ragp=&\textup{ }\dfrac{\cp_{y^I}H}{\cp_{\lambda}H}\lp \cp_{y^J}H\lp \nfi^+\kappa_{k^+}^+\cp_{\lambda}H-\cp_{\lambda}\nfi^+\rp-\cp_{\lambda}\hat{\psi}^+_J-\nfi^+\cp_{\lambda}H\textup{ }\bs{\sigma}_{L^+}^+\lp\hat{v}^+_J\rp\rp\\
\nonumber &\textup{ }+\cp_{y^J}H\textup{ }\la\nabla^+_{\hat{v}_I^+}L^+,k^+\ragp +\la \nabla^+_{\hat{v}_I^+}L^+,\hat{v}_J^+\ragp\\
\nonumber =&\textup{ }\dfrac{\cp_{y^I}H}{\cp_{\lambda}H}\lp  \nfi^+\kappa_{k^+}^+\cp_{y^J}H\textup{ }\cp_{\lambda}H-\cp_{\lambda}\hat{\psi}^+_J-\nfi^+\cp_{\lambda}H\textup{ }\bs{\sigma}_{L^+}^+\lp\hat{v}^+_J\rp\rp\\
\nonumber &\textup{ }-\cp_{y^J}H\lp \cp_{y^I}\nfi^++\nfi^+\bs{\sigma}_{L^+}^+\lp \hat{v}_I^+\rp \rp +\bs{\Theta}^{L^+}_+\lp\hat{v}_I^+,\hat{v}_J^+\rp,
\end{align}
from where  (\ref{porsi3}) follows at once.
\end{proof}

The tensors $\bs{Y}^{\pm}$ are directly defined in terms of the pull-backs
${\Phi^{\pm}}^*\lp \mathcal{L}_{\xi^{\pm}}g^{\pm}\rp$. We
compute ${\Phi^{+}}^*\lp \mathcal{L}_{\xi^{+}}g^{+}\rp$ and then get
${\Phi^{-}}^*\lp \mathcal{L}_{\xi^{-}}g^{-}\rp$ as a suitable specialization. The computation relies on the following fundamental relationship between Lie derivatives and embeddings.
Let $\Omega$ be an embedded hypersurface on $\lp\mathcal{M},g\rp$ with embedding $\Phi:\Sigma\hookrightarrow\Omega$ and first fundamental form $\gamma$. Then, given a scalar function $\rho:\Omega\longrightarrow\mathbb{R}$ and a vector field $Y\in\Gamma\lp T\Sigma\rp$, the following identity holds
\begin{equation}
\label{lieprop}
\Phi^*\lp\mathcal{L}_{\rho\textup{ }\Phi_*\lp Y\rp}g\rp=\rho\lieo_Y\gamma+\tdo\rho\otimes\gamma\lp Y,\cdot\rp+\gamma\lp Y,\cdot\rp\otimes\tdo\rho.
\end{equation}
For the transversal part of  ${\Phi^{+}}^*\lp \mathcal{L}_{\xi^{+}}g^{+}\rp$
we shall use
\begin{align}
\label{pbL1}{\Phi^{\pm}}^*\lp \mathcal{L}_{L^{\pm}}g^{\pm}\rp=&\lp\la\nabla^{\pm}_{e^{\pm}_a}L^{\pm},e^{\pm}_b\ragp+\la\nabla^{\pm}_{e^{\pm}_b}L^{\pm},e^{\pm}_a\ragp\rp\tdo{\theta}^a\otimes\tdo{\theta}^b.
\end{align}
From the decomposition (\ref{eqfortdX}) we get
\begin{align}
\nonumber {\Phi^+}^*&\lp\mathcal{L}_{\xi^+}g^+\rp=\textup{ }{\Phi^+}^*\lp\mathcal{L}_{\frac{1}{A}\lp L^++\Phi^+_*\lp X\rp\rp}g^+\rp\\
\nonumber =&\textup{ }{\Phi^+}^*\lp \frac{1}{A}\mathcal{L}_{\lp L^++\Phi^+_*\lp X\rp\rp}g^+-\dfrac{\td A}{A^2}\otimes g^+\lp L^++\Phi^+_*\lp X\rp,\cdot\rp-g^+\lp L^++\Phi^+_*\lp X\rp,\cdot\rp\otimes \dfrac{\td A}{A^2} \rp\\
  \nonumber  \stackrel{\eqref{lieprop}}{=}                                           &\textup{ }\dfrac{1}{A}{\Phi^+}^*\lp \mathcal{L}_{ L^+}g^+\rp+\dfrac{1}{A}\lieo_X\gamma-\dfrac{\tdo A}{A^2}\otimes \lp {\Phi^+}^*\lp \bs{L}^+\rp+\gamma\lp X,\cdot\rp\rp-\lp {\Phi^+}^*\lp \bs{L}^+\rp+\gamma\lp X,\cdot\rp\rp\otimes \dfrac{\tdo A}{A^2}\\
  \nonumber =&\textup{ }\dfrac{1}{A}\lp{\Phi^+}^*\lp \mathcal{L}_{ L^+}g^+\rp+\lieo_X\gamma+\dfrac{\tdo A}{A}\otimes \lp A\uhat{\bs{\psi}}^-+\nfi^+\cp_{\lambda}H\tdo\lambda\rp+\lp A\uhat{\bs{\psi}}^-+\nfi^+\cp_{\lambda}H\tdo\lambda\rp\otimes \dfrac{\tdo A}{A}\rp\\
\label{pullbackxi} =&\textup{ }\dfrac{1}{A}\lp{\Phi^+}^*\lp \mathcal{L}_{ L^+}g^+\rp+\lieo_X\gamma+\tdo A\otimes \uhat{\bs{\psi}}^-+\uhat{\bs{\psi}}^-\otimes \tdo A+\nfi^{-}\lp\tdo A\otimes\tdo\lambda+\tdo\lambda\otimes\tdo A\rp\rp,
\end{align}
where Lemma \ref{lem3} and (\ref{gammaX}) are used in the fourth equality
and \eqref{AderH} in the last one.
We next compute $\lieo_{X} \gamma$.  Since the first fundamental form $\gamma$ is degenerate, i.e. $\gamma_{1A}=0$, one gets
\begin{equation}
\label{liegamma}\lp\lieo_{X}\gamma\rp_{ab}=X^c\cp_{\theta^c}\gamma_{ab}+\gamma_{aI}\cp_{\theta^b}X^{I}+\gamma_{Ib}\cp_{\theta^a}X^{I}.
\end{equation}  
Denoting the Lie derivative and Levi-Civita connection on a section $\lambda=\textup{const.}$ of $\Sigma$ by $\lieo^{\parallel}$ and $\nabla^{\parallel}$ respectively and inserting (\ref{XA})-(\ref{X1}) into (\ref{liegamma}) yields
\begin{align}
\label{complie11}\lp\lieo_{X}\gamma\rp_{11}&=0,\\
\label{complie1J}\lp\lieo_{X}\gamma\rp_{1J}&=\gamma_{JL}\cp_{\lambda}X^L=\cp_{\lambda}\lp \gamma_{JL}X^L\rp-X^L\cp_{\lambda}\gamma_{JL}=\cp_{\lambda}\lp\omega_J-A\psi^-_J\rp-2X^L\bs{\chi}_-^{k^-}\lp v_J^-,v_L^-\rp,\\
\nonumber \lp\lieo_{X}\gamma\rp_{IJ}&=X^1\cp_{\lambda}\gamma_{IJ}+X^L\cp_{y^L}\gamma_{IJ}+\gamma_{IL}\cp_{y^J}X^{L}+\gamma_{LJ}\cp_{y^I}X^{L}\\
\label{complieIJ}&=2X^1\bs{\chi}_-^{k^-}\lp v_I^-,v_J^-\rp+\lieo^{\parallel}_{X}\gamma_{IJ}=2X^1\bs{\chi}_-^{k^-}\lp v_I^-,v_J^-\rp+\nabla^{\parallel}_IX_J+\nabla^{\parallel}_JX_I,\textcolor{white}{-----}
\end{align}
where $X_I:=\gamma_{IL}X^L$. By (\ref{XA}) and (\ref{lastlem}), the covariant derivative $\nabla^{\parallel}_IX_J$ can be expanded  to
\begin{align}
\nonumber \nabla_I^{\parallel}X_J&=\nabla_I^{\parallel}\lp \omega_J-A\psi^-_J\rp=\nabla_I^{\parallel}\lp \nfi^+\nabla_J^{\parallel}H+\hat{\psi}^+_J-A\psi^-_J\rp\\
\label{covX} &=\nabla_I^{\parallel}\nfi^+\nabla_J^{\parallel}H+\nfi^+\nabla_I^{\parallel}\nabla_J^{\parallel}H+\nabla_I^{\parallel}\hat{\psi}^+_J-A\nabla_I^{\parallel}\psi^-_J-\psi^-_J\nabla_{I}^{\parallel}A.
\end{align}

We have now all the ingredients to compute $\bs{Y}^{\pm}$ and the energy-momentum tensor on the shell. The result  is given in the next proposition (where brackets, as usual, denote symmetrization).
\begin{prop}
\label{prop6} The tensor $\bs{Y}^{+}$ has the following components:
\begin{align}
\label{Y11}Y_{11}^+=&\textup{ } \nfi^{-}\lp \kappa_{k^+}^+\cp_{\lambda}H + \dfrac{\cp_{\lambda}\cp_{\lambda}H}{\cp_{\lambda}H}-\dfrac{\cp_{\lambda}\nfi^-}{\nfi^-}\rp,\\
\label{Y1J} Y_{1J}^+=&\textup{ }  \nfi^-\Bigg( \kappa_{k^+}^+\nabla_{J}^{\parallel}H-\bs{\sigma}_{L^+}^+\lp \hat{v}_J^+\rp+\dfrac{\cp_{\lambda}\cp_{y^J}H}{\cp_{\lambda}H}-\dfrac{X^L\bs{\chi}_-^{k^-}\lp v_J^-,v_L^-\rp}{\nfi^+\cp_{\lambda}H}-\dfrac{\nabla_J^{\parallel}\nfi^-}{2\nfi^-}-\dfrac{\cp_{\lambda}\psi_J^-}{2\nfi^-}\Bigg),\\
\nonumber Y^+_{IJ}=&\textup{ }\nfi^-\Bigg(\dfrac{\kappa_{k^+}^+\nabla_{I}^{\parallel}H \textup{ }\nabla_{J}^{\parallel}H}{\cp_{\lambda}H}-\dfrac{\nabla_{{\lp I\rd}}^{\parallel}H\textup{ }\cp_{\lambda}\hat{\psi}^+_{\ld J\rp}}{\nfi^+\lp\cp_{\lambda}H\rp^2}-\dfrac{2\nabla_{{\lp I\rd}}^{\parallel}H\textup{ }\bs{\sigma}_{L^+}^+\lp\hat{v}^+_{\ld J\rp}\rp}{\cp_{\lambda}H} +\dfrac{\bs{\Theta}^{L^+}_+\lp\hat{v}_{\lp I\rd}^+,\hat{v}_{\ld J\rp}^+\rp}{\nfi^+\cp_{\lambda}H}\\
\label{YIJ}&\textup{ }+ \dfrac{X^1\bs{\chi}_-^{k^-}\lp v_I^-,v_J^-\rp}{\nfi^+\cp_{\lambda}H}+\dfrac{\nabla_{I}^{\parallel}\nabla_{J}^{\parallel}H}{\cp_{\lambda}H}+\dfrac{\nabla_{\lp I \rd}^{\parallel}\hat{\psi}^+_{\ld J \rp}}{\nfi^+\cp_{\lambda}H}-\dfrac{\nabla_{\lp I \rd}^{\parallel}\psi^-_{\ld J \rp}}{\nfi^-}\Bigg),
\end{align}
while  $\bs{Y}^{-}$ is
\begin{align}
\label{Ymenos}
Y^{-}_{11} = \nfi^-\lp\kappa^-_{k^-}-\dfrac{\cp_{\lambda}\nfi^-}{\nfi^-}\rp,\quad Y^{-}_{1J} =-\nfi^-\lp\bs{\sigma}^-_{L^-}\lp v^-_J\rp+ \dfrac{\nabla^{\parallel}_{J}\nfi^-}{2\nfi^-}+\dfrac{\cp_{\lambda}\psi^-_J}{2\nfi^-}  \rp,\quad Y^{-}_{IJ} = \bs{\Theta}^{L^-}_-\lp v_{\lp I \rd}^-,v_{\ld J \rp}^-\rp.
\end{align}
Consequently, the components of the energy-momentum tensor of the shell are given by
\begin{align}
\nonumber \tau^{11}=&\textup{ }-\dfrac{\gamma^{IJ}}{\nfi^-}\Bigg( \dfrac{\kappa_{k^+}^+\nabla_{I}^{\parallel}H \textup{ }\nabla_{J}^{\parallel}H}{\cp_{\lambda}H}-\dfrac{\nabla_{{I}}^{\parallel}H\textup{ }\cp_{\lambda}\hat{\psi}^+_{J}}{\nfi^+\lp\cp_{\lambda}H\rp^2}-\dfrac{2\nabla_{{I}}^{\parallel}H\textup{ }\bs{\sigma}_{L^+}^+\lp\hat{v}^+_{J}\rp }{\cp_{\lambda}H}+\dfrac{\bs{\Theta}^{L^+}_+\lp\hat{v}_{I}^+,\hat{v}_{J}^+\rp}{\nfi^+\cp_{\lambda}H}\\
\label{finaltau1} &\textup{ }+ \dfrac{X^1\bs{\chi}_-^{k^-}\lp v_I^-,v_J^-\rp}{\nfi^+\cp_{\lambda}H}+\dfrac{\nabla_{I}^{\parallel}\nabla_{J}^{\parallel}H}{\cp_{\lambda}H}+\dfrac{\nabla_{I}^{\parallel}\hat{\psi}^+_{J}}{\nfi^+\cp_{\lambda}H}-\dfrac{\nabla_{I}^{\parallel}\psi^-_{J}}{\nfi^-}- \dfrac{\bs{\Theta}^{L^-}_-\lp v_{I}^-,v_{J}^-\rp}{\nfi^-}\Bigg),\\
\label{finaltau2} \tau^{1I}=&\textup{ }\dfrac{\gamma^{IJ}}{{\nfi^-}}\Bigg( \kappa_{k^+}^+\nabla_{J}^{\parallel}H+\dfrac{\cp_{\lambda}\cp_{y^J}H}{\cp_{\lambda}H}-\dfrac{X^L\bs{\chi}_-^{k^-}\lp v_J^-,v_L^-\rp}{\nfi^+\cp_{\lambda}H}-\lp\bs{\sigma}_{L^+}^+\lp \hat{v}_J^+\rp-\bs{\sigma}^-_{L^-}\lp v^-_J\rp\rp \Bigg),\\
\label{finaltau3} \tau^{IJ}=&\textup{ }-\dfrac{\gamma^{IJ}}{{\nfi^-}}\lp  \kappa_{k^+}^+\cp_{\lambda}H-\kappa^-_{k^-} + \dfrac{\cp_{\lambda}\cp_{\lambda}H}{\cp_{\lambda}H} \rp.
\end{align}
\end{prop}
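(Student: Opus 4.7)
The plan is to compute $\bs{Y}^{+}$ from the master identity \eqref{pullbackxi}, then obtain $\bs{Y}^{-}$ as a degenerate specialization, and finally substitute $\bs{V}=\bs{Y}^{+}-\bs{Y}^{-}$ into the component formulas \eqref{componentstau}. The identity \eqref{pullbackxi} decomposes $2A\,\bs{Y}^{+}$ into four summands:
(i) ${\Phi^+}^*(\mathcal{L}_{L^+}g^+)$, (ii) $\lieo_X\gamma$, (iii) $\tdo A\otimes\uhat{\bs{\psi}}^-+\uhat{\bs{\psi}}^-\otimes\tdo A$, and (iv) $\nfi^-(\tdo A\otimes\tdo\lambda+\tdo\lambda\otimes\tdo A)$. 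I would evaluate each summand in the $\{\cp_\lambda,\cp_{y^I}\}$ basis and then collect.

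For (i) I would plug the components \eqref{porsi1}-\eqref{porsi3} of Lemma \ref{lem5} into the pull-back formula \eqref{pbL1}. For (ii) I would use \eqref{complie11}-\eqref{complieIJ} together with the expansion \eqref{covX}, noting $X_I=\omega_I-A\psi^-_I=\nfi^+\cp_{y^I}H+\hat{\psi}^+_I-A\psi^-_I$ from Lemmas \ref{lem3}-\ref{lem4}. For (iii)-(iv), after the overall factor $1/A$ is absorbed, the contributions involve $\cp_{\theta^a}A/A$, which Lemma \ref{lem5} rewrites as $\cp_{\theta^a}\cp_\lambda H/\cp_\lambda H+\cp_{\theta^a}\nfi^+/\nfi^+-\cp_{\theta^a}\nfi^-/\nfi^-$. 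The term $-\cp_{\theta^a}\nfi^-/\nfi^-$ in (iv) is what ultimately produces the clean $-\cp_\lambda\nfi^-/\nfi^-$ piece in \eqref{Y11} and the $-\nabla^\parallel_J\nfi^-/(2\nfi^-)$ piece in \eqref{Y1J}, while the $\cp_{\theta^a}\nfi^+/\nfi^+$ terms must cancel against $-\cp_{\theta^a}\nfi^+$ contributions coming from (i). Care is required: the factor $\nfi^-$ that multiplies (iv) and the factor $\psi^-$ that multiplies (iii) conspire so that, after division by $2A=2\cp_\lambda H/f$, the overall coefficient $\nfi^-$ appears in front of every component of $\bs{Y}^{+}$ in \eqref{Y11}-\eqref{YIJ}.

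For $\bs{Y}^{-}$ I would repeat the above scheme in the trivial case: the embedding $\Phi^-$ can be taken as the identity (so $e^-_1=k^-$, $e^-_I=v^-_I$), the rigging is $\xi^-=L^-$, hence effectively $A=1$, $X=0$, $\cp_{\theta^a}A=0$. Only piece (i) survives, which gives \eqref{Ymenos} after applying the minus-version of Lemma \ref{lem5} (formulas \eqref{porsi1}-\eqref{porsi3} with $\cp_\lambda H=1$, $\cp_{y^J}H=0$, $b^J_I=\delta^J_I$, $\hat{v}^-_I=v^-_I$, $\hat{\psi}^-_I=\psi^-_I$). The final step is purely algebraic: form $V_{ab}=Y^+_{ab}-Y^-_{ab}$ and plug into \eqref{componentstau} with $n^1=-1/\nfi^-$. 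The terms $-\cp_\lambda\nfi^-/\nfi^-$ in $Y^+_{11}$ and $Y^-_{11}$ cancel, as do the $\cp_\lambda\psi^-_J/(2\nfi^-)$ pieces in $Y^+_{1J}$ and $Y^-_{1J}$, delivering \eqref{finaltau3}, \eqref{finaltau2}, \eqref{finaltau1} in turn.

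The main obstacle is not conceptual but combinatorial: the symmetrized products in (iii)-(iv), combined with the asymmetry of the term $\nabla^+_{e^+_I}L^+$ in $J\leftrightarrow I$ visible in \eqref{porsi3}, produce many $\nabla^\parallel_I\hat{\psi}^+_J$, $\cp_\lambda\hat{\psi}^+_J$ and $\nabla^\parallel_I H\,\bs{\sigma}^+_{L^+}(\hat v^+_J)$ contributions that must be symmetrized and matched correctly against the analogous terms coming from $\lieo_X\gamma$ via \eqref{covX}. The cleanest way to manage this is to compute each of the three component types ($(11)$, $(1J)$, $(IJ)$) in a separate display, symmetrizing with explicit $(\cdot)$-brackets only at the end, and to verify the consistency of the resulting expression with the general covariance check described in Appendix A (behaviour of $\bs{Y}^{\pm}$ under rigging transformations), which serves as a non-trivial sanity check on the bookkeeping.
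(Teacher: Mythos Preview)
Your proposal is correct and follows essentially the same route as the paper: start from \eqref{pullbackxi}, evaluate the four summands using Lemma~\ref{lem5} and \eqref{pbL1}--\eqref{covX}, obtain $\bs{Y}^{-}$ by specializing to $H=\lambda$, $b_I^J=\delta_I^J$, $X^a=0$, and then substitute into \eqref{componentstau}. The only cosmetic difference is that the paper obtains $\bs{Y}^{-}$ by directly specializing the already-derived formulas \eqref{Y11}--\eqref{YIJ} rather than by rerunning the minus-version of Lemma~\ref{lem5}, but this amounts to the same computation.
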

\begin{proof}
Using (\ref{pullbackxi}) and the definition of $\bs{Y}^+$ one finds
\begin{align}
\label{afuodsn} \bs{Y}^+=&\textup{ }\dfrac{1}{2A}\lp{\Phi^+}^*\lp \mathcal{L}_{ L^+}g^+\rp+\lieo_X\gamma+\tdo A\otimes \uhat{\bs{\psi}}^-+\uhat{\bs{\psi}}^-\otimes \tdo A+\nfi^{-}\lp\tdo A\otimes\tdo\lambda+\tdo\lambda\otimes\tdo A\rp\rp,
\end{align}
For the $Y^+_{11}$ component, substitution of
(\ref{complie11}), (\ref{pbL1}) and (\ref{porsi1}) yields
\begin{align}
Y^+_{11}=&\textup{ }\dfrac{1}{2A}\lp{\Phi^+}^*\lp \mathcal{L}_{ L^+}g^+\rp_{11}+2\nfi^{-}\cp_{\lambda}A\rp=\nfi^{-}\lp \kappa_{k^+}^+\cp_{\lambda}H-\dfrac{\cp_{\lambda}\nfi^+}{\nfi^+} +\dfrac{\nfi^-}{\nfi^+}\dfrac{\cp_{\lambda}A}{\cp_{\lambda}H}\rp,
\end{align}
which is (\ref{Y11}) after replacing $\cp_{\lambda}A$ as given in
(\ref{derA}). Similarly, (\ref{afuodsn}) together with the definition of $\uhat{\bs{\omega}}$ gives
\begin{align}
\nonumber Y_{1J}^+=&\textup{ }\dfrac{1}{2A}\lp{\Phi^+}^*\lp \mathcal{L}_{ L^+}g^+\rp_{1J}+\lp\lieo_X\gamma\rp_{1J}+\psi^-_J\cp_{\lambda} A +\nfi^{-}\cp_{y^J} A\rp\\
\nonumber =&\textup{ }\dfrac{1}{2A}\Bigg( \cp_{\lambda}H\bigg( 2\nfi^+\kappa_{k^+}^+\nabla_{J}^{\parallel}H-2\nfi^+\bs{\sigma}_{L^+}^+\lp \hat{v}_J^+\rp-\nabla_{J}^{\parallel}\nfi^+ \bigg)\\
\nonumber &\textup{ }+\nfi^+\cp_{\lambda}\cp_{y^J}H-A\cp_{\lambda}\psi_J^--2X^L\bs{\chi}_-^{k^-}\lp v_J^-,v_L^-\rp +\nfi^{-}\nabla_{J}^{\parallel} A\Bigg),
\end{align}
from where (\ref{Y1J}) is deduced after inserting (\ref{derA}).
For the last set of components, we combine (\ref{afuodsn}) with (\ref{pbL1}), (\ref{complieIJ}), (\ref{covX}) and (\ref{porsi3}) to get
\begin{align}
\nonumber Y_{IJ}^+=&\textup{ }\dfrac{1}{2A}\lp{\Phi^+}^*\lp \mathcal{L}_{ L^+}g^+\rp_{IJ}+\lp\lieo_X\gamma\rp_{IJ}+\psi_I^- \cp_{y^J} A+\psi_J^-\cp_{y^I} A \rp\\
\nonumber =&\textup{ }\dfrac{1}{A}\lp \la\nabla^{+}_{e^{+}_{\lp I\rd}}L^{+},e^{+}_{\ld J\rp}\ragp+ X^1\bs{\chi}_-^{k^-}\lp v_I^-,v_J^-\rp+\nabla^{\parallel}_{\lp I\rd}X_{\ld J\rp}+\psi_{\lp J\rd}^- \nabla_{\ld I\rp}^{\parallel}A\rp\\
\nonumber =&\textup{ }\dfrac{1}{A}\lp \la\nabla^{+}_{e^{+}_{\lp I\rd}}L^{+},e^{+}_{\ld J\rp}\ragp+ X^1\bs{\chi}_-^{k^-}\lp v_I^-,v_J^-\rp+  \nabla^{\parallel}_{\lp I\rd}\nfi^+\nabla^{\parallel}_{\ld J\rp}H+\nfi^+\nabla^{\parallel}_{\lp I\rd}\nabla^{\parallel}_{\ld J\rp}H+\nabla^{\parallel}_{\lp I\rd}\hat{\psi}^+_{\ld J\rp}-A\nabla_{\lp I\rd}^{\parallel}\psi^-_{\ld J\rp}\rp\\
\nonumber =&\textup{ }\dfrac{1}{A}\bigg( \nfi^+\kappa_{k^+}^+\nabla_{{\lp I\rd}}^{\parallel}H \textup{ }\nabla_{{\ld J\rp}}^{\parallel}H-\dfrac{\nabla_{{\lp I\rd}}^{\parallel}H\textup{ }\cp_{\lambda}\hat{\psi}^+_{\ld J\rp}}{\cp_{\lambda}H}-2\nfi^+\nabla_{{\lp I\rd}}^{\parallel}H\bs{\sigma}_{L^+}^+\lp\hat{v}^+_{\ld J\rp}\rp +\bs{\Theta}^{L^+}_+\lp\hat{v}_{\lp I\rd}^+,\hat{v}_{\ld J\rp}^+\rp\\
\nonumber &\textup{ }+ X^1\bs{\chi}_-^{k^-}\lp v_I^-,v_J^-\rp+\nfi^+\nabla_{\lp I \rd}^{\parallel}\nabla_{\ld J \rp}^{\parallel}H+\nabla_{\lp I \rd}^{\parallel}\hat{\psi}^+_{\ld J \rp}-A\nabla_{\lp I \rd}^{\parallel}\psi^-_{\ld J \rp}\bigg),
\end{align}
which  becomes (\ref{YIJ}) upon using (\ref{HFA}) and $\nabla_{\lp I\rd}^{\parallel}\nabla_{\ld J\rp}^{\parallel}H=\nabla_I^{\parallel}\nabla_J^{\parallel}H$.
To get $\bs{Y}^-$ it suffices to particularize (\ref{Y11})-(\ref{YIJ}) for $b_I^J=\delta_I^J$, $X^a=0$ and $H\lp\lambda,y^A\rp=\lambda$, as well as replacing all $+$ superscripts by $-$.


The components of the energy-momentum tensor are obtained from
(\ref{componentstau}) by direct subtraction of the explicit
expressions of $Y_{ab}^-$ and $Y_{ab}^+$ and using $n^1=1/\ell_1=-1/\nfi^-$, c.f. (\ref{AyAparticular}).
\end{proof}
It is a general fact of the geometry of shells (see Proposition 7 in \cite{mars2013constraint}) that
the energy-momentum tensor on the shell depends on the choice of rigging solely
by scale. More precisely, let $\tau^{ab}$ be the energy-momentum tensor
associated to a choice of rigging $\xi$ and $\widetilde{\tau}^{ab}$ the
 energy-momentum tensor of the same shell with respect to a different
choice of rigging $\widetilde{\xi}$. Then, decomposing (uniquely) $\widetilde{\xi}$
as  $\widetilde{\xi} = u \xi + T$, with $T$ tangent to the matching hypersurface, the energy-momentum tensors are related by
$\widetilde{\tau}^{ab} = u^{-1} \tau^{ab}$.  This fact can be used to perform
a non-trivial
consistency check on the expressions \eqref{finaltau1}-\eqref{finaltau3}. Indeed, we may choose any other null transverse vector
$\widetilde{L}^-=\alpha L^-+\beta k^-+q^Iv_I^-$ (with
$\beta$ and $q^I$ suitably restricted to preserve the null character
of $L^-$) and introduce all the geometric expressions defined in terms of
$\widetilde{L}^-$. Then the corresponding expression for 
$\widetilde{\tau}^{ab}$ can be proved to satisfy $\widetilde{\tau}^{ab}= \alpha^{-1} \tau^{ab}$, as required. See appendix A for details in this regard.

We emphasize that the energy-momentum tensor on the shell depends strongly on the step function $H$. However,  there is also dependence on the
map $\Psi$ sending null generators to null generators. This dependence is encoded in the hatted quantities $\hat{\psi}^+_I$ and $\hat{v}^+_I$ introduced in \eqref{hatquantities}.

In $8\pi G=c=1$ units, the different components of the energy-momentum tensor can be interpreted physically as an energy density
$\rho := \tau^{11}$, energy-flux $j^A := \tau^{1A}$ and pressure $p$ such that $\tau^{AB}=p\gamma^{AB}$, see e.g. \cite{poisson2004relativist}.

For later use, we recall that the energy-momentum tensor on a shell satisfies the Israel equations (also called {\it shell equations} or {\it surface layer equations}) which
in the null case were first obtained by Barrab{\'e}s and Israel
\cite{barrabes1991thin}. In the framework of hypersurface data, they may be written as \cite{mars2013constraint}
\begin{align}
\label{sfe1}\dfrac{1}{\sqrt{\vert\det\mathcal{A}\vert}}\cp_{\theta^a}\lp \sqrt{\vert\det\mathcal{A}\vert}\tau^{ab}\ell_b\rp-\dfrac{1}{2}\tau^{ab}\lp Y_{ab}^++Y_{ab}^-\rp=&\lc\rho_{\ell}\rc,\\
\label{sfe2}\dfrac{1}{\sqrt{\vert\det \mathcal{A}\vert}}\cp_{\theta^b}\lp \sqrt{\vert\det \mathcal{A}\vert}\tau^{bc}\gamma_{ca}\rp-\dfrac{1}{2}\tau^{bd}\cp_{\theta^a}\gamma_{bd}=&\lc J_a\rc,
\end{align}
where $\lc\rho_{\ell}\rc:=\rho_{\ell}^+-\rho_{\ell}^-$, $\lc J_a\rc:=J_a^+-J_a^-$, and
the bulk energy and momentum quantities $\rho^{\pm}_{\ell}$, $J_a^{\pm}$ are defined by (we correct two sign typos in Definition 9 of \cite{mars2013constraint}) 
\begin{align}
\label{Jrho}\rho_{\ell}^{\pm}=-{\Phi^{\pm}}^*\lp G^{\pm}\lp\xi^{\pm},\nu^{\pm}\rp\rp ,\qquad \bs{J}^{\pm}=-{\Phi^{\pm}}^*\lp G^{\pm}\lp\cdot,\nu^{\pm}\rp\rp.
\end{align}
Here $G^{\pm}$ is the Einstein tensor of $\lp\Mpm,g^{\pm}\rp$ and $\nu^{\pm}$
is the normal vector to the hypersurface normalized to $\la \xi^{\pm}, \nu^{\pm}
\ra_{g^{\pm}}=1$.

\section{Penrose's case: plane-fronted impulsive wave}
\label{secPenrplane}

One remarkable benefit of using the previous formalism is that multiple sorts of matchings can be analysed at once (for instance by considering a family of energy-momentum tensors satisfying \eqref{sfe1}-\eqref{sfe2}, or a set of functions $H\lp\lambda,y^A\rp$ with certain properties). This task becomes significantly more difficult by means of the cut-and-paste method. A great amount of interesting matchings have been studied with the latter, which makes impossible for us to cover all of them. In what follows, we shall use our formalism to analyse the first cut-and-paste construction, namely the plane-fronted impulsive wave (see the works \cite{penrose1965remarkable}, \cite{penrose1968twistor}, \cite{dewitt1968battelle}, \cite{Penrose:1972xrn} by Penrose). We will recover the results from cut-and-paste and obtain some new shells by posing different setups of energy, energy flux and pressure. 

As previously mentioned, the cut-and-paste approach arises with the publications \cite{dewitt1968battelle}, \cite{Penrose:1972xrn} by Penrose. The starting point is the plane-fronted wave, with well-known metric (see e.g. \cite{brinkmann1923riemann}, \cite{witten1962gravitation})
\begin{equation}
\label{rightmetric}
\td s^2=-2\lp \td \cv+\Psi\lp \cu,x,z\rp\td\cu\rp\td\cu+\td x^2+\td z^2.
\end{equation}
The spacetimes describing purely gravitational waves, i.e. solutions of the vacuum Einstein field equations, correspond to $\lp\frac{\cp^2}{\cp x^2}+\frac{\cp^2}{\cp z^2}\rp\Psi=0$. 
Penrose addresses the impulsive case of \eqref{rightmetric} by setting $\Psi\lp \cu,x,z\rp$ to zero except on the hypersurface defined by $\cu=0$, i.e. $\Psi\lp \cu,x,z\rp=\delta\lp\cu\rp\mathcal{H}\lp x,z\rp$ (where $\delta$ denotes Dirac delta function and $\mathcal{H}\lp x,z\rp$ is any real function). Under these circumstances, the metric becomes
\begin{equation}
\label{metric61}\td s^2=-2\lp \td \cv+\delta\lp\cu\rp\mathcal{H}\lp x,z\rp\td\cu\rp\td\cu+\td x^2+\td z^2.
\end{equation}

The possibility to perform a coordinate change which turns \eqref{metric61} into a $C^0$ form is already mentioned by Penrose \cite{dewitt1968battelle}, \cite{Penrose:1972xrn}. In fact, by writing \eqref{metric61} in terms of the coordinates $\{\cu,\cv,\e:=\frac{1}{\sqrt{2}}\lp x+i z\rp,\be:=\frac{1}{\sqrt{2}}\lp x-i z\rp\}$, which yields $\td s^2=-2\lp \td \cv+\delta\lp\cu\rp\mathcal{H}\lp \e,\be\rp\td\cu\rp\td\cu+2\td\e\td\be$, Podols{\'y} et al.\ \cite{podolsky1999nonexpanding}, \cite{podolsky2017penrose} find the suitable coordinate transformation
\begin{align}
\label{coordtransf}
&\cu=U,&&\cv=V+\Theta h+U_+ h,_{Z}h,_{\bar{Z}},&&\eta=Z+U_+ h,_{\bar{Z}},&
\end{align}
where the comma denotes partial derivative, $\Theta\lp U\rp$ is the Heaviside step function, $U_{+}:=U\Theta\lp U\rp$ is the so-called kink function and $h\lp Z,\bar{Z}\rp:=\mathcal{H}\lp\e,\be\rp\vert_{\cu=0}$ is a real-valued function. Inserting \eqref{coordtransf} into \eqref{metric61}, one obtains the following continuous metric\footnote{As pointed out in \cite{podolsky2017penrose}, to obtain \eqref{metricPodolsky} one needs to use $\frac{\td\Theta}{\td U}=\delta$, $\frac{\td \lp U\Theta\rp}{\td U}=\Theta$, $\Theta^2=\Theta$, which in general may lead to mathematical inconsistencies.}:
\begin{equation}
\label{metricPodolsky}\td s^2=2\lv \td Z+U_{+}\lp h,_{\bar{Z}Z}\td Z+h,_{\bar{Z}\bar{Z}}\td \bar{Z}\rp\rv^2-2\td U\td V.
\end{equation}

The transformation \eqref{coordtransf} immediately shows that the lightlike coordinate $\cv$ is discontinuous across the hypersurface $\cu=0$ and that the presence of the $\delta$-function on \eqref{metric61} is due to this jump. More precisely, the discontinuous coordinates $\{\cu,\cv,\e,\be\}$, chosen to preserve the Minkowski form of \eqref{metric61} on $\cu\gtrless0$, produce discontinuities on the metric, while with the continuous coordinates $\{U,V,Z,\bar{Z}\}$ the metric tensor becomes $C^0$ but loses the Minkowski form for $U>0$. Nevertheless, as we show next, the coordinates $\{\cu,\cv,\e,\be\}$ are useful to understand this spacetime as the outcome of the disjoint union of $\cu>0$ and $\cu<0$ with a jump on $\cv$ when crossing the hypersurface $\cu=0$.

When applying the cut-and-paste method to plane-fronted impulsive waves, Penrose proposes a jump on the lightlike coordinate $\cv$ of the form $\cv_+\vert_{\cu_+=0}=\cv_-+\mathcal{H}\lp x_-,z_-\rp\vert_{\cu_-=0}$, where $\{\cv_{\pm},x_{\pm},z_{\pm}\}$ refer to the coordinates $\{\cv,x,z\}$ on the regions $\cu \gtrless 0$ of \eqref{metric61} respectively. This jump follows directly from the coordinate transformation \eqref{coordtransf}. 

Let us therefore consider two spacetimes $\lp\Mpm,g^{\pm}\rp$, respectively corresponding to the $\cu \gtrless 0$ Minkowski regions and with metrics
$\td s_{\pm}^2=-2 \td \cv_{\pm}\td\cu_{\pm}+\delta_{AB} dx_{\pm}^A dx_{\pm}^{B}$. Since the construction below applies to all dimensions we let $A,B = 2,\ldots, n$
  so that the spacetimes are $n+1$ dimensional. The matching hypersurface is $\Omega^{\pm} = \{ \cu_{\pm} =0\}$ and we take
  $\{k^{\pm}=\cp_{\cv_{\pm}},v^{\pm}_I=\cp_{x^I_{\pm}}\}$ as a basis of $\Gamma\lp T\Omega^{\pm}\rp$ and $s^{\pm} = \cv_{\pm}$ as the foliation defining function.  These objects clearly  satisfy (\ref{basis}). As transverse null vector we select $L^{\pm}=\cp_{\cu_{\pm}}$
  (note that both $L^{\pm}$ and $k^{\pm}$ are future). With these choices it is straightforward that  
  $\nfi^{\pm}=1$ and $\psi_{I}^{\pm}$, $\kappa_{k^{\pm}}^{\pm}$, $\bs{\sigma}_{L^{\pm}}^{\pm}\lp v^{\pm}_I\rp$, $\bs{\chi}^{k^{\pm}}_{\pm}\lp v^{\pm}_I,v^{\pm}_J\rp$, $\bs{\Theta}^{L^{\pm}}_{\pm}\lp v^{\pm}_I,v^{\pm}_J\rp$ all vanish.

To make $\Phi^{-}$ the identity embedding to $\Omega^-$ we let $\{\theta^i \} = \{ \lambda,y^I\}$ be defined by $\{ \lambda, y^I\} =  \{\cv_{-}, x^I_{-}\} |_{\Omega^{-}}$. By
\eqref{embedPhi+}, we know that the embedding $\Phi^+$ must take the form
\begin{align*}
  \Phi^+(\lambda,y^I ) = \{ \cu_{+} =0, \cv_{+} = H(\lambda, y^I), x_{+}^I = h^I (y^J)\}
  \end{align*}
We observe that Penrose's jump corresponds to a step function of the form $H\lp\lambda,y^A\rp=\lambda+\mathcal{H}\lp y^A\rp$. The matching of two Minkowski regions with this $H\lp\lambda,y^A\rp$ will therefore result in spacetimes describing plane-fronted impulsive waves (purely gravitational when $\mathcal{H}\lp y^A\rp$ is harmonic). The framework introduced in previous sections, however, must provide all the possible matchings and hence a more general set of step functions. Let us analyze this general matching problem and
  discuss some interesting particular cases.

The shell junction conditions \eqref{junct1}-\eqref{junct4} impose
\begin{align}
\label{jcMkcase}
  \delta_{IJ} = b_I^Lb_J^K\delta_{LK}, \qquad b^{I}_J := \frac{\partial h^I}{\partial y^J}
\end{align}
and determine uniquely the rigging $\xi^{+}$, which from Corollary \ref{cor1} takes the explicit form
\begin{align*}
  \xi^+= \frac{1}{\partial_{\lambda} H} \left (
  L^+ +  \delta^{AB} (b^{-1})^I_A  \partial_{y^I} H  \left ( \frac{1}{2}  (b^{-1})^J_B \partial_{y^J} H   k^+
  + v^+_B \right ) \right ),
\end{align*}
whereas the metric hypersurface data is $\{\gamma_{ab}=\delta_a^A\delta_b^B\delta_{AB},\ell_a=-\delta_a^1,\ell^{(2)}=0\}$. As discussed above, \eqref{jcMkcase} constitutes an isometry condition between the spatial sections of $\Omega^{\pm}$, defined in the present case by $\cv_{\pm}=\textup{const}$ and which are simply euclidean planes. The corresponding isometries are obviously translations and rotations. This freedom in the matching could be absorbed in a rotation and translation in the coordinates $\{\cu_+,\cv_+,x_+,z_+\}$ which would set $b_I^J=\delta_I^J$. However, the results that follow are insensitive to $b_I^J$ so we avoid applying this transformation in the $\lp\Mp,g^+\rp$ side.

Particularizing the results of Proposition \ref{prop6}, one easily finds 
\begin{equation}
\label{YpmMk}
Y_{ab}^-=0,\qquad Y_{ab}^{+}=\dfrac{\cp_{\theta^a}\cp_{\theta^b}H}{\cp_{\lambda}H},
\end{equation}
whereas the energy-momentum tensor of the shell turns out to be
\begin{align}
\label{tauMk}
& \tau^{11}=-\dfrac{\delta^{IJ}\nabla^{\parallel}_I  \nabla^{\parallel}_JH}{\cp_{\lambda}H}=-\dfrac{\delta^{IJ}\cp_{y^I} \cp_{y^J}H}{\cp_{\lambda}H}, && \tau^{1I}=\dfrac{\delta^{IJ}\cp_{\lambda} \cp_{y^J}H}{\cp_{\lambda}H} , && \tau^{IJ}=-\dfrac{\delta^{IJ}\cp_{\lambda} \cp_{\lambda}H}{\cp_{\lambda}H}. &
\end{align}
Thus, $\delta_{IJ}\tau^{IJ}=-\frac{\lp n-1\rp\cp_{\lambda}\cp_{\lambda}H}{\cp_{\lambda}H}$ and $\textup{ }\delta_{IJ}\tau^{1I}=\frac{\cp_{\lambda}\cp_{y^J}H}{\cp_{\lambda}H}$, which can be combined with expressions \eqref{YpmMk}-\eqref{tauMk} to obtain
\begin{align}
\label{tauYm}
\tau^{ab}Y^+_{ab}=\dfrac{2\delta^{IJ}}{\lp\cp_{\lambda}H\rp^2}\lp \cp_{\lambda}\cp_{y^I}H\textup{ }\cp_{\lambda}\cp_{
y^J}H-\cp_{y^I}\cp_{y^J}H\textup{ }\cp_{\lambda}\cp_{\lambda}H\rp=2\delta_{IJ}\lp \tau^{1I}\tau^{1J}-\dfrac{\tau^{11}\tau^{IJ}}{n-1}\rp.
\end{align}
This, together with $\bs{Y}^-=0$, $\vert\det\mathcal{A}\vert=1$, $\tau^{ab}\ell_b=-\tau^{1a}$, $\tau^{ba}\gamma_{ac}=\delta_c^B\gamma_{AB}\tau^{bA}$ and the vanishing of the Einstein tensor in Minkowski, brings the shell field equations \eqref{sfe1}-\eqref{sfe2} into the following form
\begin{align}
\label{Mksfe}& 0=\cp_{\lambda}\tau^{11}+\cp_{y^A}\tau^{1A}+\delta_{IJ}\lp \tau^{1I}\tau^{1J}-\dfrac{\tau^{11}\tau^{IJ}}{n-1}\rp,&& 0=\cp_{\lambda}\tau^{1A}+\cp_{y^B}\tau^{BA}.&&
\end{align}

Let us start by considering no shell, i.e. $\bs{V}=0$. The results for this case should be viewed as a consistency check, since the absence of shell must give rise to the whole Minkowski spacetime. Integrating the equations \eqref{YpmMk} with $Y^+_{ab}=0$ yields $H\lp\lambda,y^A\rp=a\lambda+c_Jy^J+d$, where $a,c_J,d\in\mathbb{R}$ and $a>0$. Due to the trivial form of the embedding $\Phi^-$ and the fact that $s^+=\cv_+=H\lp\lambda,y^A\rp$, this step function corresponds to the jump
$\cv_+=a\cv_-+ c_J y^J +d$ when crossing the hypersurface $\cu_{\pm}=0$. This means that the only possible isometries between the boundaries $\Omega^{\pm}$ are (besides the translations and
  rotations in the $\{x^I_+\}$ coordinates already discussed) null translations and null rotations in the $\lp\Mp,g^+\rp$ side . Since all of them are isometries of the Minkowski metric, the matching indeed recovers the global Minkowski spacetime.

We next consider the vacuum case, i.e. $\tau^{ab}=0$. Integrating \eqref{tauMk} with the l.h.s.\ equal to zero gives the step function 
\begin{align}
\label{stepfuncstar}
&\tau^{1J}=\tau^{IJ}=0 && \Longleftrightarrow  && H\lp\lambda,y^A\rp=a\lambda+\mathcal{H}\lp y^A\rp,&\\
\label{stepfuncstar2}
&\tau^{11}=0 && \Longleftrightarrow && \sum_{I=2}^3\frac{\cp^2 \mathcal{H}}{\lp\cp y^I\rp^2}=0\quad\textup{and}\quad a>0, a\in\mathbb{R}.&
\end{align}
The freedom in $a$ corresponds to a boost in the $\lp\mathcal{M}^+,g^{+}\rp$ spacetime so we may set $a=1$ without loss of generality and hence recover Penrose's step. Note that setting $\tau^{ab}=0$ automatically forces $\mathcal{H}\lp y^A\rp$ to be harmonic, which is consistent with the Dirac delta limit of $\Psi\lp\cu,x,z\rp$ when the vacuum equations for \eqref{rightmetric} are imposed.

As a simple generalization of this example, one can consider non-zero energy, i.e. $\tau^{11}\neq0$, while keeping $\tau^{1J}=\tau^{IJ}=0$. This does not change the form of the step function, which is still given by \eqref{stepfuncstar}. It follows that Penrose's step function \eqref{stepfuncstar} corresponds to absence of pressure and energy flux. Therefore, it describes either purely gravitational waves (vacuum case) or shells of null dust, i.e. a pressureless fluid of massless particles moving at the speed of light. Now, (\ref{Mksfe}) implies that $\tau^{11}$ must be $\lambda$-independent. Writing $\tau^{11}=\rho\lp y^A\rp$ and using \eqref{tauMk} yields
\begin{equation}
\sum_{I=2}^3\dfrac{\cp^2 \mathcal{H}}{\lp\cp y^I\rp^2}=-a\rho\lp y^A\rp.
\end{equation}
Again, the constant $a$ can be set to one by applying a boost in $\lp\Mp,g^+\rp$. Observe that the energy condition $\rho\lp y^A\rp\geq0$ is equivalent to $H\lp\lambda,y^A\rp$ being a superharmonic function.

Finally, let us keep both the energy and the energy flux of the shell completely free and consider a non-zero pressure $p\lp\lambda,y^A\rp$. This case is no longer included in Penrose's cut-and-paste constructions. Since $\cp_{\lambda}H>0$, the pressure can be expressed as $p=-\cp_{\lambda}\lp \ln\lp\cp_{\lambda}H\rp\rp$, whose integration gives $\cp_{\lambda}H=\alpha\lp y^A\rp\exp\lp-\int p\lp\lambda,y^A\rp\td\lambda\rp$, where $\alpha\lp y^A\rp>0$ is the integration ``constant". Therefore,
\begin{equation}
\label{MkHpress}
H\lp\lambda,y^A\rp=\alpha\lp y^A\rp\int\exp\lp-\int p\lp\lambda,y^A\rp\td\lambda\rp\td\lambda+\mathcal{H}\lp y^A\rp,
\end{equation}
where $\mathcal{H}\lp y^A\rp$ is a second integration function.

In order to discuss the effect of the pressure in the matching, we start by noting the following simple consequences of $e^-_1 = k^-$
  and $e^+_1 = (\partial_{\lambda} H) k^+$  combined with the fact that $k^{\pm}$ are geodesic and affinely parametrized
\begin{equation}
\label{spmMk} 
\begin{array}{l}
e^-_1\lp s^-\rp=1,\\
e^+_1\lp s^+\rp=\cp_{\lambda}H,
\end{array}\quad\qquad \begin{array}{l}
\nabla_{e^-_1}^-e_{1}^-\lp s^-\rp=0,\\
\nabla_{e^+_1}^+e_{1}^+\lp s^+\rp=\cp_{\lambda}\cp_{\lambda}H.
\end{array}
\end{equation}

Consider two null generators $\sigma^-\subset\Omega^-$, $\sigma^+=\bs{\Phi}\lp\sigma^-\rp\subset\Omega^+$. Both functions $s^{\pm}$ have been built so that their rate of change measured by $k^{\pm}$ is equal to one, c.f. \eqref{basis}. We call ``velocity''  the rate of change of $s^{\pm}$ along a null vector
  along $\Omega^{\pm}$ and   ``acceleration'' the rate of change of the velocity. 
The matching, however, does not identify the vectors $k^{\pm}$ but the vectors $e^{\pm}_1$. Therefore, when moving along $\sigma^{\pm}\subset\Omega^{\pm}$, the velocity and acceleration associated to $e^{\pm}_1$ (i.e. as measured by $\lambda$) can be different, see \eqref{spmMk}. Let us hence take $\lambda$ as the measure parameter for both sides. This allows us to introduce the concepts of self-compression and self-stretching of points along any null generator $\sigma^{\pm}$. \textit{There will exist self-compression (resp. self-stretching) whenever the acceleration measured by $\lambda$ is strictly negative (resp. positive)}. Accordingly, this effect will not take place on $\Omega^-$ due to its identification with $\Sigma$, but it may certainly occur in $\Omega^{+}$. Equations \eqref{spmMk} show that the velocity and the acceleration are respectively given by the first and second derivatives of $H\lp\lambda,y^A\rp$. Consequently, this effect is ruled by the pressure, as it essentially determines $\cp_{\lambda}\cp_{\lambda}H$ at each point $q\in\Sigma$. Note that vanishing pressure entails constant velocity, which obviously gives no self-compression nor self-stretching. However, the velocity along the curves $\sigma^{\pm}$ can still be different (this is why we are not using the terms ``stretching'' or ``compressing'', which would still be occurring in this situation).  From the definition of the pressure, it follows that $\textup{sign}\lp p\lp\lambda,y^A\rp\rp=-\textup{sign}\lp \cp_{\lambda}\cp_{\lambda}H\rp$. Consequently, if the pressure is positive (resp. negative) (c.f. \eqref{spmMk}), then the acceleration along $e_1^+$  is negative (resp. positive) and there exists self-compression (resp. self-stretching) of points towards the future. Alternatively, one can conclude that \textit{positive pressure pushes points towards lower values of $H\lp \lambda,y^A\rp$ (or $s^+$) and vice versa}.

For a better understanding of this behaviour, let us consider a pressure depending only on $\lambda$ and write $p\lp\lambda\rp=-\frac{\mu''}{\mu'}$, where $'$ denotes derivative with respect to $\lambda$ and $\mu\lp\lambda\rp$ is any regular function with $\mu'\lp\lambda\rp>0$ $\forall\lambda$.  
From \eqref{MkHpress}, it follows that  $\cp_{\lambda}H=\alpha\lp y^A\rp\mu'\lp\lambda\rp>0$ and $H\lp\lambda,y^A\rp=\alpha\lp y^A\rp\mu\lp\lambda\rp+\mathcal{H}\lp y^A\rp$, after simple redefinitions of $\alpha(y^A)$ and $\mathcal{H}(y^A)$. Note that necessary and sufficient conditions for the range of the embedding
  $\Phi^+$ to be the whole of $\Omega^+$ is that $\alpha(y) >0$ and that
$\lim_{\lambda\rightarrow\pm\infty}\mu\lp\lambda\rp=\pm\infty$ (recall that $\mu(\lambda)$ is monotonically increasing).
The components of the energy-momentum tensor are 
\begin{align}
\label{egshellmu}
&  \tau^{11}=-\dfrac{1}{\alpha \mu'} \delta^{IJ} \lp\mu \cp_{y^I} \cp_{y^J}\alpha+ \cp_{y^I} \cp_{y^J}\mathcal{H}\rp, && \tau^{1I}=\dfrac{1}{\alpha }\delta^{IJ} \cp_{y^J}\alpha , && \tau^{IJ}=-\delta^{IJ} \dfrac{\mu''}{\mu'}.&
\end{align}
Observe that this setup is still fairly general in the sense that it allows for energy-momentum tensors with all components different from zero. The specific behaviour of the energy-momentum tensor is obviously ruled by $\mu\lp\lambda\rp$ and the particular form of the functions $\alpha\lp y^A\rp, \mathcal{H}\lp y^A\rp$. It is now clear that fixing the pressure amounts to setting the form of $H\lp\lambda,y^A\rp$, which contains the information about the effect of self-compression or self-stretching on the $\Omega^+$ boundary. 

As an example, let us define the function $\nu\lp\lambda\rp:=\sqrt{(a+2)\lambda^2+b^2}$ and consider 
\begin{equation}
\label{examplepressmu}
\mu\lp\lambda\rp:= \lp a +1\rp\lambda-\sqrt{a}\nu\lp\lambda\rp
\end{equation}
with $a>0$ and $b$ real constants. As the inequality $a+1-\sqrt{a (a+2)}>0$ holds for all positive $a$, this function satisfies
$\lim_{\lambda\rightarrow\pm\infty}\mu\lp\lambda\rp=\pm\infty$. The previous expressions yield
\begin{align}
&\mu'\lp\lambda\rp=a+1-\dfrac{\sqrt{a} (a+2) \lambda}{\nu\lp\lambda\rp}>0, \qquad \mu''\lp\lambda\rp=-\dfrac{\sqrt{a} (a+2) b^2}{\nu^{3}\lp\lambda\rp}\leq0&\\
\label{pressureex}&p\lp\lambda\rp=\dfrac{\sqrt{a} (a+2) b^2}{\nu^{2}\lp\lambda\rp \left(\lp a+1\rp \nu\lp\lambda\rp-\sqrt{a}(a+2)  \lambda\right)}\geq0,&\\
\label{stepfex}&H\lp\lambda,y^A\rp=\alpha\lp y^A\rp\lp\lp a +1\rp\lambda-\sqrt{a}\nu\lp\lambda\rp\rp+\mathcal{H}\lp y^A\rp, &
\end{align}
and energy density of the shell is given by
\begin{align}
\label{egformu}
& \tau^{11}=-\dfrac{\nu}{\alpha}\dfrac{\lp\lp a+1\rp\lambda-\sqrt{a}\nu\rp \delta^{IJ} \cp_{y^I} \cp_{y^J}\alpha+ \delta^{IJ} \cp_{y^I} \cp_{y^J}\mathcal{H}
}{ \lp a+1\rp\nu-\sqrt{a}\lp a+2\rp\lambda}.
\end{align}
This density diverges asymptotically at infinity (i.e. for  $\lambda\rightarrow\pm\infty$) unless $\alpha\lp y^A\rp$ is harmonic. If $b$ vanishes we have zero pressure
and we fall into a previous case ($H$ linear in $\lambda$). When $b \neq 0$, the pressure is everywhere regular, positive and
vanishes asymptotically at infinity. 
\begin{figure}[t]
\centering
\psfrag{H}{$H\lp\lambda,y^A=0\rp$}
\psfrag{p}{$p\lp\lambda\rp$}
\psfrag{x}{\hspace{-0.23cm}$\lambda$}
\psfrag{m}{Slope $1$ line}
\psfrag{s}{}
\psfrag{eg}{$\tau^{11}\lp\lambda\rp$}
\psfrag{C}{}
\psfrag{L}{}
\includegraphics[scale=1]{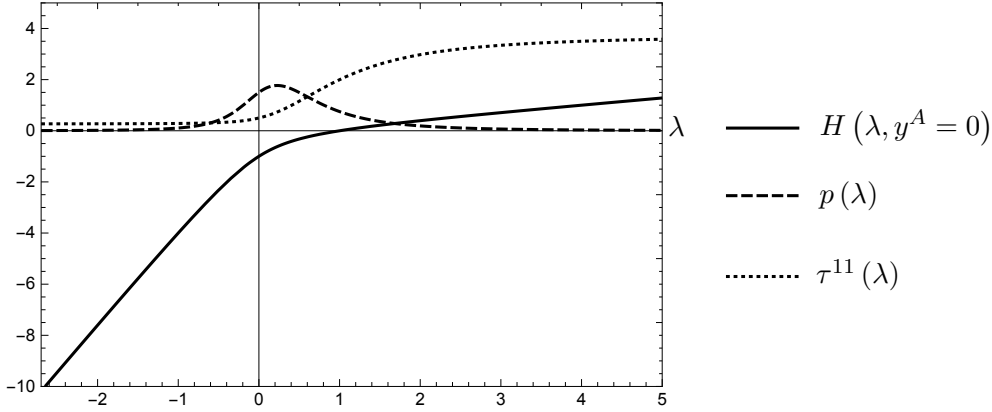}
\caption{Matching of the two Minkowski regions $\cu<0$, $\cu>0$: plot of the pressure,  step function, and  energy density of the shell
    along the null generator $\{ y^A =0\}$  for the particular values $a=1$, $b=1$, $\alpha (y^A)=1$ and  $\mathcal{H} (y^A) =\frac{1}{2(n-1)} \delta_{IJ} y^I y^J$.}
 \label{figMkpress}
\end{figure}
Under the restriction that $\alpha(y^I)$ is harmonic, a typical plot of $p\lp\lambda\rp$, $H\lp\lambda,y^A\rp$ and $\tau^{11}\lp\lambda,y^A\rp$ along a null generator of $\Omega^+$ is depicted in Figure \ref{figMkpress}.
For large negative values of $\lambda$, the step function exhibits a straight line behaviour which is a consequence of the fact that
  the pressure is negligibly small at past infinity. When $p\lp\lambda\rp$ starts increasing, the self-compression of points starts taking place and this forces the slope of $H\lp\lambda,y^A\rp$ to decrease until it reaches  again an almost constant value in the late future, once  the pressure becomes again negligible. The growth of the energy begins when the self-compression occurs and ends when the pressure approaches next-to-zero values. It tends to a finite positive value when the pressure vanishes, which suggests that \textit{it only increases (resp. decreases) on regions where there exists self-compression (resp. self-stretching)}, showing an accumulative behaviour.

To illustrate that not all the choices for the pressure result in successful matchings, we consider one last case: positive constant pressure $p$ (the negative case is completely analogous). Then, integrating \eqref{MkHpress} yields $H\lp\lambda,y^A\rp=\mathcal{H}\lp y^A\rp-\frac{1}{p}\alpha\lp y^A\rp e^{-p\lambda}$ and hence $\cp_{\lambda}\cp_{\lambda}H=-p\alpha\lp y^A\rp e^{-p\lambda}<0$. From \eqref{tauMk}, the energy and energy flux of the shell are 
  \begin{align*}
    \tau^{11} = \dfrac{\delta^{IJ}}{\alpha}\lp \dfrac{1}{p}\cp_{y^I}\cp_{y^J}\alpha -  e^{p \lambda} \cp_{y^I}\cp_{y^J}\mathcal{H}\rp, \quad \quad
    \tau^{1 I} = \delta^{IJ} \frac{\partial_J \alpha}{\alpha}.
  \end{align*}
In this situation, one finds that $\lim_{\lambda\rightarrow+\infty}H\lp\lambda,y^A\rp=\mathcal{H}\lp y^A\rp$. The positive pressure produces sustained and systematic
self-compression of points for all values of $\lambda$, which eventually results in a positive upper bound for the step function. This spoils the matching, as all the points $p\in\Omega^+$ with $s^+\lp p^+\rp>\mathcal{H}\lp y^A\rp$ cannot be identified with any  point of $\Omega^-$ or, in other words, the hypersurface $\Omega^-$ is mapped onto the proper subset
$\{ s^+ < \mathcal{H} \} \subset \Omega^-$ 
via $\bs{\Phi}$.

This last example suggests that finding possible matchings with non-zero pressure may be a significantly complicated task, specially in non-Minkowski spacetimes. In any case, the influence of the pressure producing a kind of {\it self-compression/self-stretching} of  points along the matching and its associated
  energy storage  is an interesting effect that, in our opinion, deserves
further investigation.

\appendix 

\section{Change of $\xi^-$: behaviour of the energy-momentum tensor}

As mentioned in section \ref{secegmom}, given a vector field $T$ tangent to the matching hypersurface and two rigging vector fields $\xi$, $\widetilde{\xi}$ related by $\widetilde{\xi} = u \xi + T$, their corresponding shell energy-momentum tensors satisfy $\widetilde{\tau}^{ab} = u^{-1} \tau^{ab}$ \cite{mars2013constraint}. As a consistency test, let us check that the energy-momentum tensor \eqref{finaltau1}-\eqref{finaltau3} behaves in this way. 

Let us assume that a matching of two spacetimes $\lp\Mpm,g^{\pm}\rp$ has been performed and that the rigging has been fixed by (\ref{eiyl}) after
a selection of future null transverse field $L^-$. We may repeat the matching process using a different future null transverse field $\widetilde{L}^-=\alpha L^-+\beta k^-+q^Iv_I^-$ with corresponding rigging vector $\wt{\xi}^-=\wt{L}^-$. Using tilde for all objects constructed with $\wt{L}$ the definitions \ref{eqA30} imply
\begin{equation}
\label{nfipsiapB}
\wt{\nfi}^-=\alpha\nfi^-,\qquad\wt{\psi}_I^-=\alpha\psi_I^--q_I,
\end{equation}
while the null character of $\wt{L}^-$ imposes 
\begin{equation}
\label{ap2}
2\alpha\lp \beta\nfi^-+q^I\psi_I^-\rp=\qhm\qquad\textup{where}\qquad\qhm:=q^Iq^Jh^-_{IJ}.
\end{equation}

Changing the rigging on the $\lp\Ml,g^{-}\rp$ side keeps the vector fields $e^{\pm}_a$ invariant (they only depend on the embeddings). This means that $\zeta$, $a_I$ and $b_I^J$ do not change either. On the other hand, the identification of the riggings of both sides implies that the rigging in the $\lp\Mp,g^+\rp$ side also gets modified. Let us decompose it as $\wt{\xi}^+=(1/\wt{A})\big( L^++\wt{X}^ae^+_a\big)$. Since $\zeta=\frac{A\nfi^-}{\nfi^+}$, the shell junction condition (\ref{junct2}) forces
\begin{equation}
\zeta=\dfrac{\wt{A}\wt{\nfi}^-}{\nfi^+}=\dfrac{\wt{A}\alpha\nfi^-}{\nfi^+}\qquad\Longleftrightarrow\qquad\wt{A}\alpha=A.
\end{equation}

Recalling that $\xi^+=(1/A)\lp L^++X^ae^+_a\rp$, we observe that the two riggings $\wt{\xi}^+$ and $\xi^+$ are related via $\wt{\xi}^+=\alpha\xi^++\frac{\alpha}{A}\big( \wt{X}^a-X^a\big) e^+_a$. Inserting this relation into $\la \wt{\xi}^+,e^+_B\ragp=-\wt{\psi}_B^-$, $\la\wt{\xi}^+,\wt{\xi}^+\ragp=0$ and using \eqref{nfipsiapB} and \eqref{ap2} yields
\begin{equation}
\label{ares1} \wt{X}^B=X^B+\frac{A}{\alpha}q^B,\qquad \wt{X}^1=X^1+\frac{ A\beta}{\alpha}.
\end{equation}

Each component of the energy-momentum tensor (c.f.\ \eqref{finaltau1}-\eqref{finaltau3}) is multiplied by $1/\wt{\nfi}^-$. Therefore, the transformation law of the energy-momentum tensor will be guaranteed provided each bracket in \eqref{finaltau1}-\eqref{finaltau3} turns out to be invariant. The only parts that are not trivially invariant are 
\begin{align}
\label{bracket1}& \gamma^{IJ}\lp\dfrac{\wt{X}^1\bs{\chi}_-^{k^-}\lp v_I^-,v_J^-\rp}{\nfi^+\cp_{\lambda}H}-\dfrac{\nabla_{I}^{\parallel}\wt{\psi}^-_{J}}{\wt{\nfi}^-}- \dfrac{\bs{\wt{\Theta}}^{\wt{L}^-}_-\lp v_{I}^-,v_{J}^-\rp}{\wt{\nfi}^-}\rp && \textup{in} && \tau^{11}, &\\
\label{bracket2}&  -\dfrac{\wt{X}^B\bs{\chi}_-^{k^-}\lp v_J^-,v_B^-\rp}{\nfi^+\cp_{\lambda}H}+\bs{\wt{\sigma}}^-_{\wt{L}^-}\lp v^-_J\rp && \textup{in} && \tau^{1I}, &
\end{align}

Since $\cp_{y^B}h^-_{IJ}=e^-_{B}\lp h^-_{IJ}\rp=\nabla^-_{v_B^-}\lp h^-_{IJ}\rp=2{\chr^-}_{B\lp J\rd}^Ah^-_{\ld I\rp A}-\frac{2}{\nfi^-}\psi_{\lp I\rd}^-\bs{\chi}^{k^-}_-\lp v^-_{\ld J\rp},v^-_B\rp$ (c.f. \eqref{eqA48}), the Christoffel symbols ${\Gamma^{\parallel}}^{A}_{BI}$ of the Levi-Civita covariant derivative $\nabla^{\parallel}$ of the metric $h^-_{IJ}$ are
\begin{align}
\label{ares2} h^-_{JA}{\Gamma^{\parallel}}^{A}_{BI}=&\textup{ }\dfrac{1}{2}\lp \cp_{y^B}h^-_{IJ}+\cp_{y^I}h^-_{BJ}-\cp_{y^J}h^-_{BI}\rp={\chr^-}_{BI}^Ah^-_{J A}-\frac{1}{\nfi^-}\psi_{J}^-\bs{\chi}^{k^-}_-\lp v^-_{I},v^-_B\rp.
\end{align}
Thus,
\begin{align}
\nonumber \bs{\wt{\Theta}}^{\wt{L}^-}_-\lp v_{I}^-,v_{J}^-\rp=&\textup{ }\la \nabla^-_{v^-_I}\wt{L}^-,v_{J}^-\ragm\\
\nonumber=&\textup{ }\la v^-_I\lp\alpha\rp L^-+\alpha\nabla^-_{v^-_I}L^-+\beta\nabla^-_{v^-_I}k^-+v_I^-\lp q^B\rp v_B^-+q^B\nabla^-_{v^-_I}v_B^-,v_{J}^-\ragm\\
\nonumber =&\textup{ }-\psi^-_J\cp_{y^I}\alpha +\alpha\bs{\Theta}^{L^-}_-\lp v_{I}^-,v_{J}^-\rp+\beta \bs{\chi}^{k^-}_-\lp v^-_I,v^-_J\rp  +h_{BJ}^-\cp_{y^I}q^B \\
\nonumber &\textup{ }+q^B\lp {\chr^-}_{BI}^Ah^-_{AJ}-\frac{1}{\nfi^-}\psi^-_J\bs{\chi}_-^{k^-}\lp v^-_I,v^-_B\rp\rp,\\
\nonumber =&\textup{ }-\psi^-_J\cp_{y^I}\alpha +\alpha\bs{\Theta}^{L^-}_-\lp v_{I}^-,v_{J}^-\rp+\beta \bs{\chi}^{k^-}_-\lp v^-_I,v^-_J\rp  +\nabla_{I}^{\parallel}q_J,
\end{align}
where in the third line we used Lemma \ref{lem1}. Now it follows directly from \eqref{nfipsiapB} that $\nabla_{I}^{\parallel}\wt{\psi}^-_{J}=\psi_J^-\cp_{y^I}\alpha+\alpha\nabla_{I}^{\parallel}\psi^-_{J}-\nabla_{I}^{\parallel}q_{J}$. Thus, we conclude
\begin{align}
\label{ares4} \bs{\wt{\Theta}}^{\wt{L}^-}_-\lp v_{I}^-,v_{J}^-\rp=\alpha\lp \bs{\Theta}^{L^-}_-\lp v_{I}^-,v_{J}^-\rp +\nabla_{I}^{\parallel}\psi^-_{J}\rp+\beta \bs{\chi}^{k^-}_-\lp v^-_I,v^-_J\rp  -\nabla_{I}^{\parallel}\wt{\psi}^-_{J},
\end{align}
and the invariance of \eqref{bracket1} follows from this expression and the second in \eqref{ares1}.
Concerning $\bs{\wt{\sigma}}^-_{\wt{L}^-}\lp v^-_J\rp$, we easily find
\begin{align}
\nonumber \bs{\wt{\sigma}}^-_{\wt{L}^-}\lp v^-_J\rp=&\textup{ }\dfrac{1}{\wt{\nfi}^-}\la \nabla^-_{v_J^-}k^-,\wt{L}^-\ragm=\dfrac{1}{\alpha\nfi^-}\la \nabla^-_{v_J^-}k^-,\alpha L^-+q^Bv^-_B\ragm\\
\label{ares5} =&\textup{ }\bs{\sigma}^-_{L^-}\lp v^-_J\rp+\dfrac{q^B}{\alpha\nfi^-}\bs{\chi}^{k^-}_-\lp v_J^-,v_B^-\rp.
\end{align}
Given that $\nfi^+\cp_{\lambda}H=\nfi^-A$, one obtains the invariance of \eqref{bracket2} by means of the first expression in \eqref{ares1}. This finishes the
consistency check of $\wt{\tau}^{ab}=\frac{1}{\alpha}\tau^{ab}$.

\section*{Acknowledgements}
The authors acknowledge financial support under the projects PGC2018-096038-B-I00 (Spanish Ministerio de Ciencia, Innovaci´on y Universidades and FEDER) and  SA096P20 (JCyL). M. Manzano also acknowledges the Ph.D. grant FPU17/03791 (Spanish Ministerio de Ciencia, Innovaci{\'o}n y Universidades).

\begingroup
\let\itshape\upshape

\bibliographystyle{acm}

\bibliography{ref}

\end{document}